\documentclass{journallet}
\usepackage{graphicx,amssymb,amsmath,amsfonts,fancyhdr,wrapfig,here}
\usepackage{times}
\usepackage{latexsym}
\usepackage{bbm}
\usepackage{enumerate}
\usepackage{url}
\usepackage{subfig}
\usepackage{microtype}

\pagestyle{fancy}
\fancyhead[RO,LE]{\thepage}
\fancyhead[LO]{\it\nouppercase Universal point sets for three-trees}
\fancyhead[RE]{\it\nouppercase\leftmark}
 \fancyfoot{}

\def\GG{{\mathcal G}}
\def\weight{{\rm weight}}
\def\area{{\rm area}}

\bibliographystyle{plain}

\title{Universal point sets for planar three-trees\footnote{Fulek gratefully acknowledge support from the Swiss National Science Foundation Grant No.  200021-125287/1 and ESF Eurogiga project GraDR as GA\v{C}R GIG/11/E023. Research by T\'oth was supported, in part, by the NSERC grant RGPIN 35586 and the NSF grant CCF-0830734. Preliminary results have appeared in the Proceedings
of the 12th Algorithms and Data Structures Symposium
(London, ON, 2013), LNCS 8037, Springer, pp. 341–352.}}

\author{Radoslav Fulek\thanks{Columbia University, New York City, NY, USA.
Email: \texttt{radoslav.fulek@gmail.com}}
\and
Csaba D. T\'{o}th\thanks{California State University, Northridge, CA, USA;
University of Calgary, Calgary, AB, Canada;
and Department of Computer Science, Tufts University, Medford, MA, USA.
Email: \texttt{cdtoth@acm.org}}
}

\begin{document}
\maketitle

\begin{abstract}
\noindent
For every $n\in \mathbb{N}$, we present a set $S_n$ of $O(n^{3/2}\log n)$ points in the plane
such that every planar 3-tree with $n$ vertices has a straight-line embedding in the plane
in which the vertices are mapped to a subset of $S_n$. This is the first subquadratic
upper bound on the size of universal point sets for planar 3-trees, as well as for
the class of 2-trees and serial parallel graphs.
 \end{abstract}

\section{Introduction}

Every planar graph has a \emph{straight-line embedding} in the plane~\cite{Far48} where the vertices
are mapped to distinct points and the edges to pairwise noncrossing straight line segments between the corresponding vertices.
A set $S\subset\mathbb{R}^2$ of points in the plane is called \emph{$n$-universal} if every $n$-vertex planar graph has a straight-line embedding in $\mathbb{R}^2$ such that the vertices are mapped into a subset of $S$. Similarly, $S\subset\mathbb{R}^2$ is \emph{$n$-universal for a family $\GG$} of planar graphs if every $n$-vertex planar graph in $\GG$ has a straight-line embedding in $\mathbb{R}^2$ such that the vertices are mapped into a subset of $S$. It is a longstanding open problem to determine the minimum size $f(n)$ of an $n$-universal point set for all $n\in \mathbb{N}$.
Our main result is that there is an $n$-universal point set of size $O(n^{3/2}\log n)$ for the class of planar graphs of treewidth at most three.

\begin{theorem}\label{thm:main}
For every $n\in \mathbb{N}$, there is an $n$-universal point set of size $O(n^{3/2}\log n)$ for planar 3-trees.
\end{theorem}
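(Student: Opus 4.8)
The plan is to mirror the recursive ternary structure of planar 3-trees inside the point set, using a balanced separation to turn one big problem into $O(\sqrt{n})$ problems of size $O(\sqrt{n})$. Recall that a planar 3-tree $T$ with outer face $abc$ is built from the triangle $abc$ by repeatedly inserting a vertex into a triangular face and joining it to the three vertices of that face; equivalently $T$ carries a rooted ternary \emph{decomposition tree} $D$ whose internal nodes are the vertices of $T$ other than $a,b,c$, where the root is the first inserted vertex (adjacent to all of $a,b,c$) and the three children of a node are the three subtriangles that its insertion creates. The subtree of $D$ rooted at a node $u$ encodes a sub-3-tree of $T$ whose outer triangle is the face into which $u$ was inserted, and in any planar drawing of $T$ these sub-3-trees are nested in the plane. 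The first step is a decomposition lemma: cutting $D$ at the \emph{lowest} nodes whose subtrees still have at least $\sqrt{n}$ internal nodes breaks $T$ into a \emph{core} 3-tree $T_0$ on $O(\sqrt{n})$ vertices together with $O(\sqrt{n})$ pairwise disjoint \emph{pendant} 3-trees, each on $\Theta(\sqrt{n})$ vertices, where each pendant is to be drawn inside a prescribed face (a \emph{portal}) of $T_0$. (Each cut removes at least $\sqrt{n}$ vertices, so there are $O(\sqrt{n})$ pendants, and what remains, the core, also has $O(\sqrt{n})$ vertices.) One then iterates this scheme over $O(\log n)$ scales so that every pendant eventually becomes tiny.

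For the point set, let $U_m$ denote any $m$-universal point set for planar 3-trees with $|U_m|=O(m^2)$, e.g.\ the $O(m)\times O(m)$ grid underlying the classical straight-line drawings of planar graphs. I would form $S_n$ from one copy of $U_{c\sqrt{n}}$, used to host the core $T_0$, together with, inside each face of this core drawing that may serve as a portal, a private copy of $U_{c\sqrt{n}}$ in which a pendant is to be drawn; since the core has $O(\sqrt{n})$ faces and each copy has $O(n)$ points, this contributes $O(\sqrt{n})\cdot O(n)=O(n^{3/2})$ points, and the $O(\log n)$ scales of the iteration raise the total to $O(n^{3/2}\log n)$. To realize a given $T$, I would first draw $T_0$ in the copy of $U_{c\sqrt{n}}$ in a way that makes each portal face a genuine nondegenerate triangle, then map the stored copy of $U_{c\sqrt{n}}$ sitting inside that triangle onto it by an affine transformation (universality is affine-invariant, so the transformed copy is still $\Theta(\sqrt{n})$-universal), and finally embed each pendant into its portal, recursing into the smaller scales.

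The step I expect to be the main obstacle is that $S_n$ must be a single, fixed set that works for \emph{every} planar 3-tree on $n$ vertices, whereas the decomposition lemma yields, for each $T$ separately, a different core, a different choice of which faces are portals, and pendants whose exact sizes are not known in advance and only sum to at most $n$. Making the construction oblivious therefore requires laying out the core drawing and its candidate portals so that, no matter which faces are selected and no matter the pendant sizes, each pendant can be routed to a portal with enough room inside it, while keeping all the private copies of the smaller point sets pairwise disjoint and genuinely interior to their portal triangles. Arranging such a flexible system of portals along a ``spine'' of the core drawing, and verifying that the routing always succeeds, is the delicate part of the argument and is, together with the $O(\log n)$ scales, what accounts for the logarithmic overhead; once it is in place, bounding $|S_n|$ and checking the embedding are routine.
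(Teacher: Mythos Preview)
Your decomposition lemma is false as stated. Take a planar 3-tree whose decomposition tree $D$ is a caterpillar: a path $v_1,\dots,v_{n-3}$ where each $v_i$ has two leaf children and one path child $v_{i+1}$. The subtree rooted at $v_i$ has $n-2-i$ internal nodes, so the \emph{unique} lowest node with subtree of size at least $\sqrt{n}$ is roughly $v_{n-\sqrt{n}}$. Your cut produces one pendant of size about $\sqrt{n}$ and a core $T_0$ of size about $n-\sqrt{n}$, not $O(\sqrt{n})$. The sentence ``each cut removes at least $\sqrt{n}$ vertices, so there are $O(\sqrt{n})$ pendants, and what remains, the core, also has $O(\sqrt{n})$ vertices'' is a non sequitur: bounding the number of pendants says nothing about the size of what is left over. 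Consequently your recursion does not reduce the problem size, and the $O(\log n)$ iteration does not terminate with the claimed bound.

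Even if you repair the decomposition (say via a heavy-path scheme), the second obstacle you identify is the actual heart of the matter and you do not propose a mechanism for it. Placing a private copy of $U_{c\sqrt{n}}$ inside \emph{every} potential portal face of \emph{every} possible core drawing, so that for \emph{every} 3-tree the chosen portals line up with copies large enough for their pendants, is exactly what makes the problem hard; ``arranging portals along a spine'' is not an algorithm. The paper avoids this nesting-of-copies picture entirely. It builds a single ``sparse stretched grid'': the integer points $(i,j)$ in a $\Theta(n)\times\Theta(n)$ box with $n^{1/2}\mid ij$ (plus some diagonal points), pushed through the map $(x,y)\mapsto(x,(28n)^y)$. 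The $\log n$ factor arises from counting such points, not from any recursion. The stretching guarantees that for any triangle on grid points, a certain axis-aligned rectangle lies inside it; the embedding algorithm walks the decomposition tree maintaining such a rectangle $R(\Delta)$ for each face, with $\area(R(\Delta))\gtrsim n\cdot\weight(\Delta)$, and splits it among the three children proportionally to their weights. When all three children are heavy it performs a global shift of the partial drawing by $O(n^{1/2})$ to align the split with the sparse grid (there are only $O(\sqrt{n})$ such ``hubs'', so the total shift stays within the bounding box); once $\weight(\Delta)$ drops below $\sqrt{n}$, the rectangle is dense enough to finish with the de~Fraysseix--Pach--Pollack grid drawing. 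The obliviousness is thus achieved by a single adaptive area-allocation scheme inside one point set, not by pre-placing nested copies.
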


A graph is called a \emph{$k$-tree}, for some $k\in \mathbb{N}$, if it can be constructed by the following iterative process: start with a $k$-vertex clique and successively add new vertices such that each new vertex has exactly $k$ neighbors that form a clique in the current graph. For example, 1-trees are the same as trees; 2-trees are maximal series-parallel graphs, and include also all outerplanar graphs. In general, $k$-trees are the maximal graphs with treewidth $k$. A planar 3-tree is a 3-tree that is planar. Theorem~\ref{thm:main} is the first subquadratic upper bound on the size of $n$-universal point sets for planar 3-trees, for 2-trees, and for series-parallel graphs.

\paragraph{Related previous work.}
In a pivotal paper, de~Fraysseix, Pach and Pollack~\cite{FPP90} showed that an $n$-universal set must have at least $n+(1-o(1))\sqrt{n}$ points. Chrobak and Karloff~\cite{CK89} improved the lower bound to $1.098n$ and later Kurowski~\cite{Kur04} to $(1.235-o(1))n$.
This is the currently known best lower bound for $n$-universal sets in general.
De~Fraysseix et al.~\cite{FPP90} and Schnyder~\cite{Sch90} independently showed that there are $n$-universal sets of size $O(n^2)$. In fact, an $(n-1)\times (n-1)$ section of the integer lattice is $n$-universal~\cite{CK97,Sch90} for every $n\geq 3$. Alternatively, an $\frac{4}{3}n\times\frac{2}{3}n$ section of the integer lattice is also $n$-universal~\cite{Bra08}. The quadratic upper bound is the best possible if the point set is restricted to sections of the integer lattice: Frati and Patrignani~\cite{FP07} showed (based on earlier work by Dolev~et al.~\cite{DLT84}) that if a rectangular section of the integer lattice is $n$-universal, then it must contain at least $n^2/9+\Omega(n)$ points.

Grid drawings have been studied intensively due to their versatile applications.
It is known that sections of the integer lattice with $o(n^2)$ points are $n$-universal for certain classes of graphs. For example, Di Battista and Frati~\cite{BF09} proved that an $O(n^{1.48})$ size integer grid is $n$-universal for  \emph{outerplanar} graphs. Frati~\cite{Fra10} showed that 2-trees on $n$ vertices require a grid of size at least $\Omega(n2^{\sqrt{\log n}})$. Biedl~\cite{Bie11} observed that the grid embedding of all $n$-vertex 2-trees requires an $\Omega(n)\times \Omega(n)$ section of the integer lattice \emph{if} the combinatorial embedding (i.e., all vertex-edge and edge-face incidences) is given. On the other hand, Zhou et al.~\cite{ZHN12} showed recently that every  $n$-vertex series-parallel graph, and thus, every 2-tree, has a straight-line embedding in a $\frac{2}{3}n\times\frac{2}{3}n$ section of the integer lattice and a section of the integer lattice of area $0.3941n^2$.
Researchers have studied classes of planar graphs that admit $n$-universal point sets of size $o(n^2)$. A classical result in this direction, due to Gritzmann~et al.~\cite{GMPP91} (see also~\cite{Bos02}), is that every set of $n$ points in general position is $n$-universal for \emph{outerplanar graphs}. Angelini~et al.~\cite{ABK+11} generalized this result and showed that there exists an $n$-universal point set of size $O(n(\log n/\log \log n)^2)$ for so-called \emph{simply nested} planar graphs. A planar graph is simply nested if it can be reduced to an outerplanar graph by successively deleting chordless cycles from the boundary of the outer face. Recently, Bannister et al.~\cite{BCDE13+} found $n$-universal point sets of size $O(n \log n)$ for simply nested planar graphs, and $O(n \ {\rm polylog}\ n)$ for planar graphs of bounded pathwidth.

Theorem~\ref{thm:main} provides a new broad class of planar graphs that admit subquadratic $n$-universal sets.

Algorithmic questions pertaining to the straight-line embedding of planar graphs have also been studied.
The \emph{point set embeddability} problem asks whether a given planar graph $G$ has a straight-line embedding such that the vertices are mapped to a given point set $S\subset \mathbb{R}^2$. The problem is known to be NP-hard~\cite{Cab06}, and remains NP-hard even for 3-connected planar graphs~\cite{DM12}, triangulations and 2-connected outerplanar graphs~\cite{BV11}. However, it has a polynomial-time solution for 3-trees~\cite{DMN+11,NMR12,DM13}.
In a \emph{polyline} embedding of a plane graph, the edges are represented by pairwise noncrossing polygonal paths. Biedl~\cite{Bie11} proved that every 2-tree with $n$ vertices has a polyline embedding where the
vertices are mapped to an $O(n)\times O(\sqrt{n})$ section of the integer lattice, and each edge is a polyline with at most two bends. Everett et al.~\cite{ELL+10} showed that there is a set $S_n$ of $n$ points in the plane, for every $n\in \mathbb{N}$, such that every $n$-vertex planar graph has a polyline embedding with at most one bend per edge on $S$. Dujmovi\'{c} et al.~\cite{DEL+11} constructed a point set $S_n'$ of size $O(n^2/\log n)$ for all $n\in \mathbb{N}$ such that every $n$-vertex planar graph has a polyline embedding with at most one bend per edge in which the vertices as well as all bend points of the edges are mapped to $S_n'$.

\paragraph{Organization.}
We briefly review some structural properties of planar 3-trees (Section~\ref{sec:prop3}),
then construct a point set $S_n\subset\mathbb{R}^2$ for every $n\in \mathbb{N}$ (Section~\ref{sec:constr3}),
and show that it is $n$-universal for planar 3-trees (Section~\ref{sec:alg3}).

\section{Basic Properties of Planar Three-Trees\label{sec:prop3}}

A graph $G$ is a \emph{planar 3-tree} if it can be constructed by the following iterative procedure. Initially, let $G=K_3$, the complete graph with three vertices. Successively augment $G$ by adding one new vertex $u$ and three new edges that join $u$ to three vertices of a triangle such that no two vertices are connected to all the vertices of the same triangle. A planar 3-tree can be embedded in the plane simultaneously with the iterative process: the initial triangle forms the outer-face and each new vertex $u$ is inserted in the interior of the face corresponding to the triangle it is attached to.

The iterative augmentation process that produces a 3-tree $G$ can be represented by a rooted tree $T=T(G)$ as follows (this is called a \emph{face-representative tree} in~\cite{HMRS12}). Refer to Fig.~\ref{fig:tree3}. The nodes of $T$ correspond to the triangles of $G$. For convenience we denote a vertex of $T$ by its corresponding triangle in $G$. The root of $T$ corresponds to the initial triangle of $G$. When $G$ is augmented by a new vertex $u$ connected to the vertices of the triangle $\Delta=v_1v_2v_3$, we attach three new leaves to $\Delta$ corresponding to the triangles $v_1v_2u$, $v_1uv_3$ and $uv_2v_3$.
For a node $\Delta$ of $T$, let $T_\Delta$ denote the subtree of $T$ rooted at $\Delta$.
Let $V_\Delta$ denote the set of vertices of $G$ embedded in the interior of $\Delta$.

\begin{figure}[htb]
\centering
\includegraphics[scale=0.75]{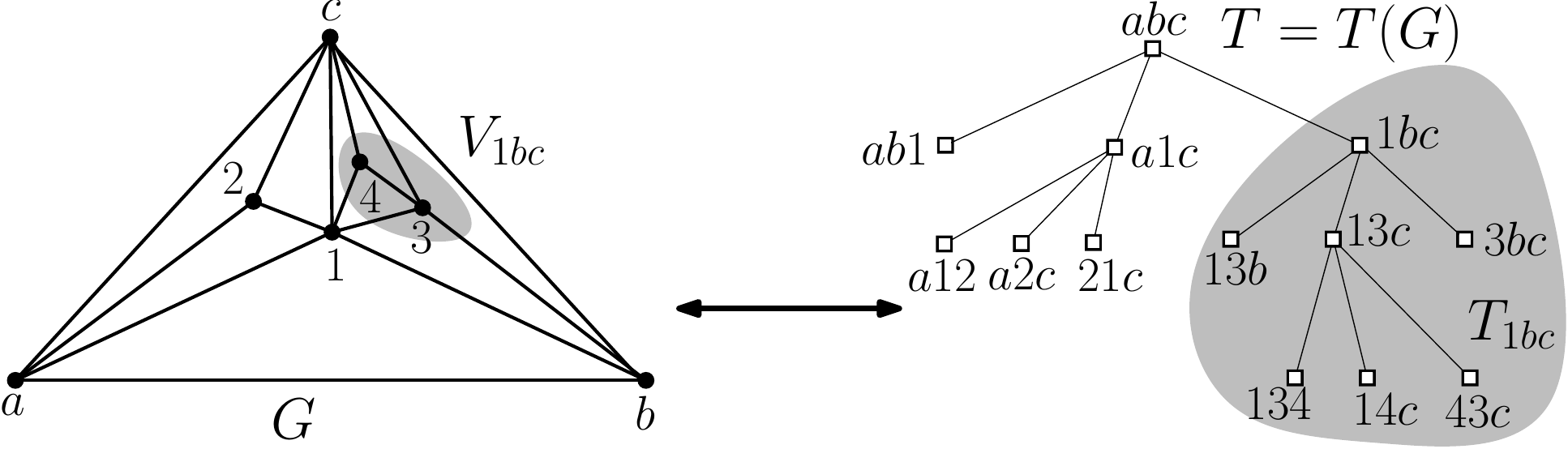}
\caption{Left: a 3-tree, constructed from the initial triangle $abc$ by successively
adding new vertices $1,\ldots , 4$.
Right: The corresponding tree $T=T(G)$.
The gray region indicate the subtree $T_{1bc}$ rooted at $1bc$, and its corresponding vertex set
$V_{1bc}\subseteq V(G)$.}
\label{fig:tree3}
\end{figure}

In Section~\ref{sec:alg3}, we embed the vertices of a planar 3-tree on a point set by traversing the tree $T$ from the root. The initial triangle $abc$ will be the outer face in the embedding such that the edge $ab$ is a horizonal line segment, and the vertex $c$ is the top vertex (i.e., it has  maximal $y$-coordinate). We then successively insert the remaining $n-3$ vertices of $G$, each of which subdivides a triangular face into three triangles. We label the vertices of each triangle of $G$ as \emph{left}, \emph{right} and \emph{top} vertex, respectively. These labels are assigned (without knowing the specifics of our embedding algorithm) as follows. Label the three vertices of the initial triangle in $G$ arbitrarily as \emph{left}, \emph{right} and \emph{top}, respectively. When $G$ is augmented by a new vertex $u$ and edges $uv_1$, $uv_2$, and $uv_3$, where $v_1$ is the left, $v_2$ is the right, and $v_3$ is the top vertex of an existing triangle $v_1v_2v_3$, then let $v_1$, $v_2,$ and $v_3$ keeps their labels left, right, top, respectively) in the new triangles $v_1v_2u$, $v_2v_3u$ and $v_1v_3u$; while vertex $u$ becomes the top vertex of $v_1v_2u$, the left vertex of $v_2v_3u$, and the right vertex of $v_1v_3u$.
The triangles $v_1v_2u$, $v_1uv_3$ and $uv_2v_3$, respectively, will be called the \emph{bottom}, \emph{left} and \emph{right} triangles within $v_1v_2v_3$. In the tree $T=T(G)$, the three children of a node corresponding to a vertex can be labeled as  \emph{bottom}, \emph{left}, and \emph{right} child, analogously.

\paragraph{Weighted Nodes in $T(G)$.}
Let $G$ be a planar 3-tree with $n$ vertices. Our embedding algorithm (in Section~\ref{sec:alg3}) is guided by the tree $T=T(G)$, which represents an incremental process that constructs $G$ from a single triangle. Recall that $T_\Delta$ denotes the subtree of $T$ rooted at a node $\Delta$; and $V_\Delta$ denotes the set of vertices of $G$ that correspond to nodes in $T_\Delta$. Let the \emph{weight} of a node $\Delta$ of $T$ be $\weight(\Delta)=|V_\Delta|$. The tree $T$ is a partition tree: for every node $\Delta$, $\weight(\Delta)$ equals one plus the total weight of the children of $\Delta$.

Let $\alpha\in (0,1]$ be a constant. A node $\Delta$ is \emph{heavy} (resp., \emph{light}) in $T$ if its weight is at least (resp., less than) $n^{\alpha}$. We designate some of the nodes in $T$ as \emph{hubs} recursively in a top-down traversal of the tree $T$: Let the root of $T$ be a hub.
Let a node $\Delta\in V_T$ be a hub if $n^\alpha\leq \weight(\Delta)\leq \weight(\Delta')-n^\alpha$, for every hub $\Delta'$ that is an ancestor of $\Delta$. We note a few immediate consequences of the definition.

\begin{lemma}\label{lem:sibling}
If $\Delta_1,\Delta_2\in V_T$ are heavy siblings, then they are both hubs.
\end{lemma}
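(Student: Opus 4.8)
The plan is to argue directly from the recursive definition of hubs, using the fact that $T$ is a partition tree (every node's weight equals one plus the sum of its children's weights). Let $\Delta_1$ and $\Delta_2$ be heavy siblings, and let $\Delta$ be their common parent. First I would locate the nearest hub ancestor: since the root is a hub and hubs occur along root-to-node paths, there is a lowest hub $\Delta'$ that is an ancestor of (or equal to) $\Delta$; I claim $\Delta_1$ and $\Delta_2$ each satisfy the hub condition with respect to this $\Delta'$ (and hence with respect to every hub ancestor, since $\Delta'$ is the most restrictive one — any other hub ancestor is an ancestor of $\Delta'$ and therefore has weight at least $\weight(\Delta') $, so the inequality $\weight(\Delta_i)\le \weight(\Delta'') - n^\alpha$ is only easier to meet).

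The two inequalities to check for $\Delta_i$ ($i\in\{1,2\}$) are $n^\alpha \le \weight(\Delta_i)$ and $\weight(\Delta_i) \le \weight(\Delta') - n^\alpha$. The first is immediate: $\Delta_i$ is heavy, so $\weight(\Delta_i)\ge n^\alpha$ by definition. For the second, observe that by the partition-tree property $\weight(\Delta) \ge 1 + \weight(\Delta_1) + \weight(\Delta_2)$ (the parent's weight is one plus the total weight of \emph{all} its children, of which $\Delta_1,\Delta_2$ are two). Since $\Delta_j$ is also heavy, $\weight(\Delta_j)\ge n^\alpha$ for the sibling $j\ne i$, giving $\weight(\Delta) \ge \weight(\Delta_i) + n^\alpha$, i.e. $\weight(\Delta_i) \le \weight(\Delta) - n^\alpha$. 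Finally, because $\Delta'$ is an ancestor of $\Delta$ (or equals $\Delta$), weight is monotone along root-to-leaf paths in the reverse direction, so $\weight(\Delta') \ge \weight(\Delta)$, and therefore $\weight(\Delta_i) \le \weight(\Delta) - n^\alpha \le \weight(\Delta') - n^\alpha$. This verifies the hub condition for $\Delta_i$ against $\Delta'$, and by the monotonicity remark above, against every hub ancestor; hence $\Delta_i$ is a hub.

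The only subtle point — and the step I would be most careful about — is the edge case $\Delta = \Delta'$, i.e. when the common parent is itself a hub: then "$\Delta'$ an ancestor of $\Delta_i$" holds with $\Delta' = \Delta$, and the computation $\weight(\Delta_i)\le \weight(\Delta') - n^\alpha$ is exactly the inequality derived from the partition property, so nothing breaks. One should also confirm that there is no additional clause in the hub definition beyond the two inequalities quantified over hub ancestors (there is not, per the excerpt), so satisfying both inequalities for the lowest hub ancestor suffices. No induction or case analysis on the tree structure is needed beyond this; the lemma is essentially a one-line consequence of heaviness plus the additive weight relation between a node and its children.
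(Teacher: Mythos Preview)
Your proof is correct and follows essentially the same approach as the paper: both use the partition-tree inequality $\weight(\Delta)\ge 1+\weight(\Delta_1)+\weight(\Delta_2)$ at the common parent $\Delta$, combine it with heaviness of the sibling to get $\weight(\Delta_i)\le \weight(\Delta)-n^\alpha$, and then invoke weight monotonicity along the root path to extend this to every hub ancestor. Your version is more explicit (isolating the lowest hub ancestor and spelling out why it is the most restrictive), but the underlying argument is identical.
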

\begin{proof}
Let $\Delta$ denote the common parent of $\Delta_1$ and $\Delta_2$. Then $\weight(\Delta)>\weight(\Delta_1)+\weight(\Delta_2)$.
Since both $\Delta_1$ and $\Delta_2$ are heavy, we have $\weight(\Delta)>2n^\alpha$. If $\Delta'=\Delta$ or $\Delta'$ is an
ancestor of $\Delta$, we have $n^\alpha\leq \weight(\Delta_i)\leq \weight(\Delta')-n^\alpha$ for $i=1,2$.
\end{proof}

\begin{lemma}\label{lem:hubs}
The tree $T(G)$ has at most $2n^{1-\alpha}$ hubs.
\end{lemma}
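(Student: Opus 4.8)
The plan is to bound the number of hubs by a charging argument, assigning to each hub a disjoint "chunk" of at least $n^\alpha$ vertices of $G$, so that the total number of hubs is at most $n/n^\alpha = n^{1-\alpha}$, and then argue the constant $2$ accounts for the branching. First I would observe the key structural fact implied by the recursive definition: if $\Delta$ is a non-root hub and $\Delta^\ast$ is its \emph{nearest hub ancestor} (the first hub encountered on the path from $\Delta$ toward the root), then $n^\alpha \le \weight(\Delta) \le \weight(\Delta^\ast) - n^\alpha$. So along any root-to-leaf path, consecutive hubs have weights that drop by at least $n^\alpha$; since weights are positive integers bounded by $n$, any single root-to-leaf path can contain at most $n/n^\alpha = n^{1-\alpha}$ hubs. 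That alone does not bound the total number of hubs, because hubs can proliferate across different branches, so the path bound must be combined with the partition-tree structure.

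The cleaner route is the charging argument. For each hub $\Delta$ let $w'(\Delta) = \weight(\Delta) - \sum \weight(\Delta_i)$, where the sum is over those children $\Delta_i$ of $\Delta$ that are themselves \emph{heavy} — equivalently, over children of $\Delta$ that are either hubs or lie below no hub yet have weight $\ge n^\alpha$. Actually the precise quantity I want is this: to each hub $\Delta$ associate the set $V_\Delta \setminus \bigcup_{\Delta'} V_{\Delta'}$, where $\Delta'$ ranges over all hubs that are proper descendants of $\Delta$ with no intermediate hub strictly between; call this $\Delta$'s \emph{private vertex set}. These private sets are pairwise disjoint and cover all of $V(G)$, so their sizes sum to $n$. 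The crux is to show each private set has size at least $n^\alpha/2$ (or some constant times $n^\alpha$); combined with disjointness this yields at most $2n^{1-\alpha}$ hubs. To lower-bound $|$private set of $\Delta|$: if $\Delta$ has a child $\Delta_i$ that is heavy but not a hub, that can only happen when $\weight(\Delta_i) > \weight(\Delta^\ast) - n^\alpha$ for... — wait, $\Delta$ itself is the relevant hub ancestor here, so $\weight(\Delta_i) > \weight(\Delta) - n^\alpha$, forcing $\Delta$'s remaining children plus $\Delta$ itself to contribute fewer than $n^\alpha$ vertices outside $\Delta_i$; but then that heavy child $\Delta_i$ is not a hub and the chain continues down. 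Chasing this chain of heavy-non-hub children down the tree, one reaches a node whose heavy children (at most two of them, by planarity of the ternary branching and Lemma~\ref{lem:sibling}) are all hubs, and the vertices skipped over along the chain all belong to $\Delta$'s private set and number at least... here is where I would need Lemma~\ref{lem:sibling}: a node cannot have two heavy children that fail to be hubs, since heavy siblings are hubs.

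So the core argument is: walk down from hub $\Delta$; at each step at most one child can be heavy-without-being-a-hub (by Lemma~\ref{lem:sibling}, two heavy siblings would both be hubs, terminating the walk), and at each such step the node itself plus its at most two light siblings contribute at least one vertex to $\Delta$'s private set while we have not yet "used up" the budget — the walk must continue while $\weight(\text{current}) > \weight(\Delta) - n^\alpha$, so it contributes at least $\weight(\Delta) - (\weight(\Delta)-n^\alpha) = n^\alpha$ vertices to the private set before it can stop, except possibly losing a factor from the two-way branching at the terminal node, which is where the factor $2$ in $2n^{1-\alpha}$ comes from. The main obstacle I anticipate is making the "private set" bookkeeping airtight in the presence of ternary branching: I must be careful that the private sets of distinct hubs really are disjoint (they are, since each vertex lies in $V_\Delta$ for a unique minimal hub $\Delta$ containing it) and that summing the $\ge n^\alpha$ (or $\ge n^\alpha/2$) lower bounds does not double-count — this should follow once the walk-down argument is set up correctly, but the edge cases near leaves and near the root need attention. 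I would present it as: number of hubs $\le$ (total weight)/(minimum private-set size) $\le n / (n^\alpha/2) = 2n^{1-\alpha}$.
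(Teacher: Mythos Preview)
Your charging approach is close in spirit to the paper's, but the central claim—that every hub's private vertex set has size at least $n^\alpha/2$—is false, and the failure is not an edge case. Consider a hub $\Delta$ whose corresponding vertex $v$ splits $\Delta$ into children of weights $\lfloor(\weight(\Delta)-1)/2\rfloor$, $\lceil(\weight(\Delta)-1)/2\rceil$, and $0$. If $\weight(\Delta)\ge 2n^\alpha+1$, the two nonzero children are heavy siblings, hence both hubs by Lemma~\ref{lem:sibling}, and they are the immediate hub descendants of $\Delta$. Then $\Delta$'s private set is exactly $\{v\}$, of size $1$. Your walk-down argument actually detects this: the walk terminates immediately at the ``two heavy siblings'' case, and you say this is ``where the factor $2$ comes from''—but no constant factor rescues a private set of size $1$.

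The fix is to separate hubs by their branching behaviour in the \emph{hub tree} $H$ (nodes are hubs; edges connect each hub to its nearest hub ancestor). Leaf hubs and chain hubs (those with $0$ or $1$ children in $H$) do have private set size at least $n^\alpha$: for a leaf hub it is $\weight(\Delta)\ge n^\alpha$, and for a chain hub with sole hub child $\Delta'$ it is $\weight(\Delta)-\weight(\Delta')\ge n^\alpha$ by the hub definition. Since the private sets are pairwise disjoint, there are at most $n^{1-\alpha}$ leaf-or-chain hubs combined. Branching hubs (those with $\ge 2$ children in $H$) are then handled by the standard tree inequality: a rooted tree with $L$ leaves has at most $L-1$ branching nodes, giving at most $n^{1-\alpha}-1$ branching hubs. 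Summing yields fewer than $2n^{1-\alpha}$ hubs.

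The paper's proof is this same count, packaged differently: it works in the subtree $T'$ induced by heavy nodes, artificially adds a sibling leaf to each hub that is an only child in $T'$ (forcing every non-root hub's parent to be a branching node), bounds the leaves of the resulting tree $T''$ by $n^{1-\alpha}$, and finishes via the same branching-node inequality. Your private-set language is arguably cleaner once corrected, but the argument does not go through until you bound branching hubs separately rather than trying to force a uniform lower bound on all private sets.
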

\begin{proof}
Let $T'$ be the subtree of $T$ induced by all heavy nodes.
By definition, every hub of $T$ is in $T'$.
Denote by $T''$ the tree obtained from $T'$ by adding a sibling leaf
to every hub that is a single child in $T'$.
Note that every hub of $T$ is in $T''$, and its parent has at least two children in $T''$.
The tree $T''$ has at most $n^{1-\alpha}$ leaves, since every leaf of $T''$ accounts for at
least $n^{\alpha}$ vertices of $G$. Therefore, $T''$ has at most $n^{1-\alpha}-1$ vertices
with two or three children. Together with the root, there are at most $2n^{1-\alpha}$ hubs in $T$.
\end{proof}

\section{Construction of a Point Set\label{sec:constr3}}

Let $\alpha\in (0,1]$ be a constant. In this section, we construct a point set
$S_n$ of size $\Theta(n^{2-\alpha}\log n)$ for every $n\in \mathbb{N}$. In Section~\ref{sec:alg3},
we show that for $\alpha=1/2$, the point set $S_n$ of size $\Theta(n^{3/2}\log n)$ is $n$-universal for planar 3-trees. Assume for the remainder of this section that $n^{\alpha}=2^q$ for some positive integer $q\in \mathbb{N}$; otherwise let $S_n=S_{n'}$ for $n'=2^{\lceil \log_2 n\rceil}$.

The point set $S_n$ is constructed in two steps: we first choose a ``sparse'' set $B_n$ of points from a $14n\times 14n$ section of the integer lattice, and then ``stretch'' the points by the transformation $(x,y)\rightarrow (x,(28n)^y)$, as described below.

\paragraph{Sparse grid.}
Let $A_n=\{(i,j)\in \mathbb{Z}^2: 0 \le i,j\le 14n\}$ be an $14n\times 14n$ section of the integer lattice.
Let $B_n\subset A_n$ be the set of points in $A_n$ with at least one of the following
three properties (refer to Fig.~\ref{fig:grid}):
\begin{itemize}\itemsep -2pt
\item $(i,j)$ such that $n^\alpha| i j$;
\item $(i+k,j+k)$ such that $n^\alpha|i$, $n^\alpha|j$, and $k\in \{1,2,\ldots  n^\alpha\}$ (\emph{forward diagonals});
\item $(i+k,j-k)$ such that $n^\alpha|i$, $n^\alpha|j$, and $k\in \{1,2,\ldots  n^\alpha\}$ (\emph{backward diagonals}).
\end{itemize}
Note that for every $0\leq i\leq 14n$, if $n^\alpha|i$, then all points $(i,j)\in A_n$ are in $B_n$. We say that these points form a \emph{full row}. Similarly, for every $0\leq j\leq 14n$, if $n^\alpha|j$, then all points $(i,j)\in A_n$ are in $B_n$, forming a \emph{full column}. The points $(i,j)\in A_n$, with $n^\alpha|i$ and $n^\alpha|j$ lie at the intersection points of full rows and full columns.

\begin{figure}[htb]
\centering
\centering
\subfloat[]{
\label{fig:gridB}
\includegraphics[width=2in]{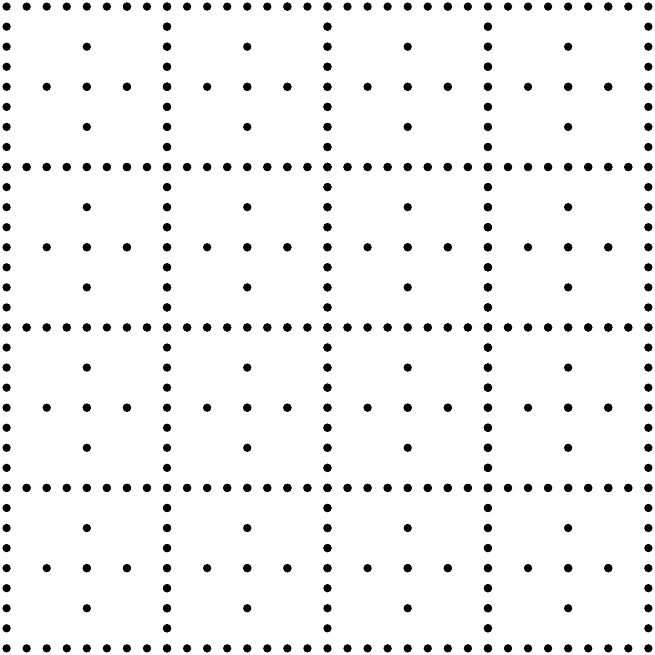}
}
\hspace{2cm}
\subfloat[]{
\label{fig:gridD}
\includegraphics[height=2in]{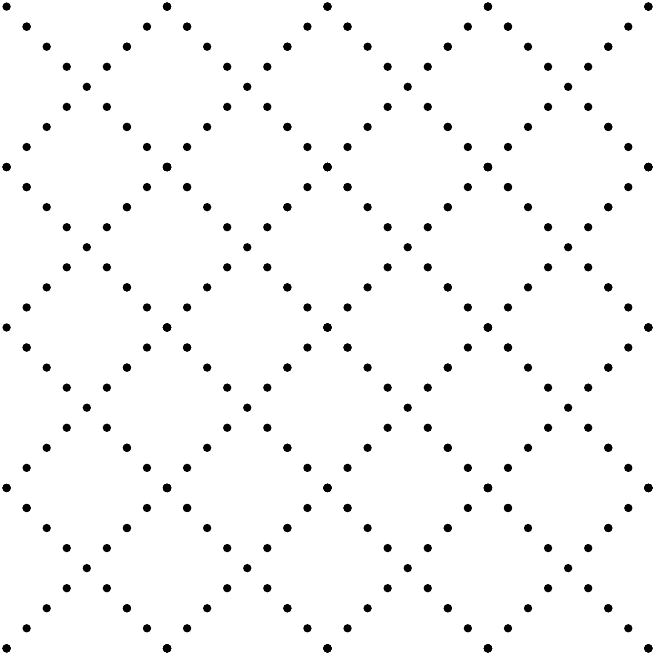}
}
\caption{(a) A patten of points $(i,j)\in \mathbb{Z}^2$ with $8|ij$.
(b) A pattern of forward and backward diagonals.
\label{fig:grid}}
\end{figure}

\paragraph{Stretched grid.}
We deform the plane by the following transformation.
$$\tau: \mathbb{R}^2\rightarrow \mathbb{R}^2, \hspace{2cm} (x,y)\rightarrow (x,(28n)^y).$$
For an integer point $(i,j)\in \mathbb{Z}^2$, we use the shorthand notation
$\tau(i,j)=\tau((i,j))$. If $A\subset \mathbb{R}^2$ is a rectangular section of the integer
lattice (a \emph{grid}), then we call the point set $\tau(A)=\{\tau(p):p\in A\}$ a
\emph{stretched grid}. Note that $\tau$ translates every point vertically,
and it translates points of the same $y$-coordinate by the same vector.

\paragraph{Universal point set for 3-trees.}
We are now in a position to define $S_n$. Let $S_n=\tau(B_n)$.

Similarly to~\cite{BMN11}, our illustrations show the ``unstretched'' point set $B_n=\tau^{-1}(S_n)$ instead of $S_n$. The transformation $\tau^{-1}$ maps line segments between points in $S_n$ to Jordan arcs between grid points in $B_n$. In our figures, line segments are drawn as Jordan arcs that correctly represent the above-below relationship between segments and points (Fig.~\ref{fig:stretching}).

\subsection{Properties of Sparse Grids}

We first show that $B_n$ contains $O(n^{2-\alpha}\log n)$ points.

\begin{lemma}\label{lem:count}
For every $\alpha\in (0,1]$, the sparse grid $B_n$ contains $O(n^{2-\alpha}\log n)$ points.
\end{lemma}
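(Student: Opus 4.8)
The plan is to bound separately the number of points of each of the three types in the definition of $B_n$, and then add the bounds. Throughout, recall that the bounding box is $14n\times 14n$ and that $n^\alpha=2^q$ is an integer.

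First I would count the points $(i,j)$ with $n^\alpha\mid ij$. For a fixed $i$ with $0\le i\le 14n$, write $d=\gcd(i,n^\alpha)$; then $n^\alpha\mid ij$ iff $(n^\alpha/d)\mid j$, so the number of admissible $j$ in $[0,14n]$ is $O(nd/n^\alpha)$. Summing over $i$, the total is $O\!\big(\tfrac{n}{n^\alpha}\sum_{i=0}^{14n}\gcd(i,n^\alpha)\big)$. The key estimate is $\sum_{i=0}^{N}\gcd(i,m)=O(N\,d(m))$ when $N\ge m$, where $d(m)$ is the number of divisors of $m$ — indeed, grouping the $i$'s by the value $e=\gcd(i,m)$ (which must divide $m$), there are at most $N/e+1$ values of $i$ with $e\mid i$, contributing at most $e(N/e+1)=N+e\le 2N$ each, and there are $d(m)$ choices of $e$. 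Since $m=n^\alpha=2^q$ has $d(m)=q+1=\Theta(\log n)$ divisors, we get $\sum_{i=0}^{14n}\gcd(i,n^\alpha)=O(n\log n)$, hence $O(n^{2-\alpha}\log n)$ points of the first type. (Alternatively, one may observe directly that the first-type points lie on the $O(n/n^\alpha)$ full rows, the $O(n/n^\alpha)$ full columns, and, for each divisor $2^t$ of $n^\alpha$, the rows/columns spaced $2^t$ apart that are ``activated'' by the complementary factor — leading to the same $\Theta(\log n)$ factor.)

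Next I would count the forward-diagonal points $(i+k,j+k)$ with $n^\alpha\mid i$, $n^\alpha\mid j$, $k\in\{1,\ldots,n^\alpha\}$, and the backward ones symmetrically. There are $O(n/n^\alpha)$ valid choices of $i$ and $O(n/n^\alpha)$ of $j$, and $n^\alpha$ choices of $k$, so this is at most $O\!\big((n/n^\alpha)^2\cdot n^\alpha\big)=O(n^{2-\alpha})$ points — already within the claimed bound, with room to spare. The backward diagonals contribute the same. Adding the three contributions, $|B_n|=O(n^{2-\alpha}\log n)$, as claimed; it is clear from the full rows alone that this is also $\Omega(n^{2-\alpha})$, consistent with the $\Theta(n^{2-\alpha}\log n)$ asserted for the whole construction.

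The only genuinely nonroutine step is the divisor-sum estimate $\sum_{i\le N}\gcd(i,m)=O(N\,d(m))$ and its specialization to $m=2^q$; everything else is a direct count. I expect this to be the ``main obstacle,'' though it is a standard elementary estimate. One subtlety to handle carefully is that distinct $(i,j,k)$ triples may yield the same lattice point (diagonals emanating from different corners can overlap), but since we only need an upper bound on $|B_n|$, over-counting is harmless. A final remark: the factor $14$ in the box size plays no role in the asymptotics and is carried through the constants.
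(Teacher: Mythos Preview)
Your proposal is correct and takes essentially the same approach as the paper: both bound the diagonal points by the trivial count $O((n/n^\alpha)^2\cdot n^\alpha)=O(n^{2-\alpha})$, and both bound the $n^\alpha\mid ij$ points by summing, over each divisor $2^k$ of $n^\alpha=2^q$, the number of pairs $(i,j)$ with $2^k\mid i$ and $2^{q-k}\mid j$. The only cosmetic difference is that the paper writes this sum out directly, while you package it as the estimate $\sum_{i\le N}\gcd(i,m)=O(N\,d(m))$ and then specialize to $m=2^q$; your parenthetical ``alternatively'' description is in fact exactly the paper's argument.
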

\begin{proof}
It is enough to consider the case that $n^{\alpha}=2^q$ for some positive integer $q\in \mathbb{N}$. We count first the points $(i,j)\in A_n$ such that $2^q|i j$. The grid $A_n$ has $14n+1$ rows and $14n+1$ columns. If $(i,j)\in A$, then $2^q|ij$, hence $2^k|i$ and $2^{q-k}|j$ for some $k=0,1,\ldots, q$. There are exactly $14n/2^k+1$ values $j$, $0\leq j\leq 14n$, with $2^{q-k}|j$, and so the number of pairs $(i,j)\in A_n$ with $2^k|i$ and $2^{q-k}|j$ is $(14n/2^k+1)(14n/2^{q-k}+1)$. Therefore, the total number of points $(i,j)\in A_n$ is bounded above by
$$\sum_{k=0}^q \left(\frac{14n}{2^k}+1\right)\left(\frac{14n}{2^{q-k}}+1\right) \leq
q\cdot \frac{(14n)^2}{2^q} +2\cdot 2\cdot 14n+1 =O(n^{2-\alpha}\log n).$$

Consider now the points of the forward and backward diagonals. Every $n^{\alpha}$-th row and every $n^{\alpha}$-th column is full, and so $B_n$ contains $(14n^{1-\alpha}+1)\cdot (14n^{1-\alpha}+1)=O(n^{2-\alpha})$ points lying at a full column and a full row. Each such point column generates at most $n^\alpha-1$ points in a forward diagonal and $n^\alpha-1$ points in a backward diagonal. The total number of these points is $O(n^{2-\alpha})$.
\end{proof}

The convex hull of $A_n$, denoted ${\rm conv}(A_n)$, is a closed square of side length $14n$.
For an axis-aligned (open) rectangle $R=(x_1,x_2)\times (y_1,y_2)$, we introduce the following parameters:
\begin{itemize}\itemsep -2pt
\item the \emph{width} of $R$ is $w(R)=x_2-x_1$;
\item the \emph{height} of $R$ is $h(R)=y_2-y_1$;
\item the \emph{area} of $R$ is $\area(R)=(x_2-x_1)(y_2-y_1)$.
\end{itemize}
For example, if $R_0={\rm int}({\rm conv} (A_n))$, then $w(R_0)=14n$, $h(R_0)=14n$, and $\area(R_0)=(14n)^2$.

Note also that the sparse grid contains at least one point in every sufficiently large axis-aligned rectangle.

\begin{lemma}\label{lem:dense}
Let $R\subset {\rm conv}(A_n)$ be an open axis-aligned rectangle such that
$w(R)>1$, $h(R)>1$ and $\area(R)\geq 4\cdot n^\alpha$.
Then $B_n\cap R\neq \emptyset$.
\end{lemma}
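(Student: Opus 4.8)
The plan is to show that any open axis-aligned rectangle $R \subset {\rm conv}(A_n)$ with $w(R)>1$, $h(R)>1$ and $\area(R)\ge 4n^\alpha$ must contain a point of $B_n$. I would split into two cases according to whether $R$ is ``wide'' or ``tall'' enough to contain a full row or a full column of $B_n$.

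First, suppose $R=(x_1,x_2)\times(y_1,y_2)$ has height $h(R) = y_2-y_1 \ge n^\alpha$. Since $R\subset {\rm conv}(A_n)$ we have $0\le y_1 < y_2 \le 14n$, and an interval of length at least $n^\alpha$ contains a multiple of $n^\alpha$ in its interior (using $h(R)>1$ so the interval is genuinely open and nondegenerate, and that $n^\alpha = 2^q$ is an integer). Hence there is an integer $j$ with $n^\alpha\mid j$ and $y_1 < j < y_2$; this $j$ indexes a full row. Because $w(R)>1$, the open interval $(x_1,x_2)$ contains an integer $i$ with $0\le i\le 14n$, and then $(i,j)\in B_n\cap R$. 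The symmetric argument handles the case $w(R)\ge n^\alpha$: a full column passes through $R$, and since $h(R)>1$ it picks up an integer point of $B_n$ inside $R$.

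It remains to treat the case $w(R) < n^\alpha$ and $h(R) < n^\alpha$. Here I would use the first defining family of $B_n$: points $(i,j)$ with $n^\alpha \mid ij$. The area bound gives $w(R)\cdot h(R)\ge 4n^\alpha$. The interval $(x_1,x_2)$ has length $>4$ relative to... more carefully: since $w(R), h(R) > 1$, each of $(x_1,x_2)$ and $(y_1,y_2)$ contains at least $\lfloor w(R)\rfloor \ge w(R)-1 > w(R)/2$ (resp. $h(R)/2$) consecutive integers, hence the box $R$ contains a sub-grid of at least $(w(R)/2)\times(h(R)/2)$ lattice points, i.e. at least $\area(R)/4 \ge n^\alpha$ lattice points $(i,j)$ forming a combinatorial rectangle $I\times J$ with $|I|,|J|\ge 1$ and $|I|\cdot|J| \ge n^\alpha = 2^q$. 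I then want to find $(i,j)\in I\times J$ with $2^q \mid ij$. Write $|I| = 2^a\cdot(\text{odd})$-ish; more robustly, among $|I|$ consecutive integers the largest power of $2$ dividing some element is at least $|I|$ — indeed some element of any $|I|$ consecutive integers is divisible by $2^{\lfloor \log_2 |I|\rfloor}$, so pick $i\in I$ with $2^s\mid i$ where $2^s\ge |I|/2$, and similarly $j\in J$ with $2^t\mid j$ where $2^t\ge |J|/2$; then $2^{s+t}\mid ij$ and $2^{s+t} \ge |I|\cdot|J|/4$. This loses a constant factor, so I would instead be slightly sharper: in any block of $m$ consecutive integers there is one divisible by $2^{\lceil \log_2 m\rceil - 1}$... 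I expect the clean statement is that one can choose $i\in I$ divisible by $2^s$ and $j\in J$ divisible by $2^t$ with $s+t\ge q$ whenever $|I|\cdot|J|\ge 2^{q}$ up to the slack already absorbed by the factor $4$ in the hypothesis $\area(R)\ge 4n^\alpha$. That yields $(i,j)\in B_n\cap R$.

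The main obstacle is the bookkeeping in this last case: converting the continuous area condition $\area(R)\ge 4n^\alpha$ into a guarantee about the $2$-adic valuations of integers available in the two coordinate intervals, while tracking exactly where the constant $4$ is consumed (one factor of $2$ in each coordinate from passing from real length to a count of integers, which is why the hypothesis has $4n^\alpha$ rather than $n^\alpha$). Everything else is a short interval/divisibility argument. I would organize the write-up as the three cases above, doing the full-row and full-column cases first since they are immediate, and then the ``small rectangle'' case with the valuation estimate.
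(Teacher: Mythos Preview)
Your case split is unnecessary, and your third case does not close. The paper's proof is a single uniform argument: let $k$ be the largest integer with $2^k<w(R)$ and $\ell$ the largest with $2^\ell<h(R)$ (the hypotheses $w(R)>1$, $h(R)>1$ ensure $k,\ell\ge 0$). Then $2^k\ge w(R)/2$ and $2^\ell\ge h(R)/2$, so $2^{k+\ell}\ge \area(R)/4\ge n^\alpha=2^q$, hence $k+\ell\ge q$. An open interval of length $>2^k$ always contains a multiple of $2^k$, so pick such an $i$ in the $x$-range and a multiple $j$ of $2^\ell$ in the $y$-range; then $2^q\mid 2^{k+\ell}\mid ij$, and $(i,j)\in B_n\cap R$. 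No cases, and the factor $4$ is consumed exactly once.

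In your Case~3 you first pass from continuous lengths to integer counts $|I|,|J|$ (costing roughly a factor $2$ in each coordinate), and then your valuation bound $2^{s+t}>|I|\cdot|J|/4$ costs another factor $4$. The hypothesis supplies only one factor of $4$, so the bookkeeping as written does not balance. Worse, the ``clean statement'' you hope for---that $|I|\cdot|J|\ge 2^q$ alone yields $i\in I$, $j\in J$ with $2^q\mid ij$---is false: with $I=J=\{1,2,3\}$ and $q=3$ one has $|I|\cdot|J|=9\ge 8$, yet the best product is $2\cdot 2=4$ and $8\nmid 4$. (Your argument can in fact be rescued by using the Case-3 assumption $w(R),h(R)<n^\alpha$ to squeeze out $(w(R)-1)(h(R)-1)>2n^\alpha$, but you do not do this.) The clean fix is simply to skip the integer-counting detour and work with the continuous lengths directly, which is exactly the paper's argument and also subsumes your first two cases.
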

\begin{proof}
Let $k\in \mathbb{N}$ be the largest integer such that $2^k<w(R)$; and $\ell\in \mathbb{N}$
be the largest integer such that $2^\ell<h(R)$.
Then $2^k\geq w(R)/2$, $2^\ell\geq h(R)/2$, and so $2^{k+\ell}\geq \area(R)/4 \geq n^\alpha=2^q$.
That is, we have $k+\ell\geq q$. Now $R$ intersects a vertical line $\ell_x: x=i$ such that $2^k|i$;
and it also intersect a horizontal line $\ell_y:y=j$ such that $2^\ell|j$.
The point $(i,j)\in A_n$ is in $R$, and $2^{k+\ell}|ij$, as required.
\end{proof}

\begin{lemma}\label{lem:dense2}
Let $R\subset {\rm conv}(A_n)$ be an open axis-aligned rectangle such that
[$w(R)>1$ and $h(R)>n^\alpha$] or [$w(R)>n^\alpha$ and $h(R)>1$].
Then there is a point $(i,j)\in B_n\cap R$ on a forward and a backward diagonal.
\end{lemma}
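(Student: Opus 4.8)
The plan is to use each branch of the hypothesis to force $R$ to meet a prescribed residue class modulo $n^\alpha$ of lattice lines in the ``long'' direction, and then to read off a diagonal point of $B_n$ lying on such a line. I would begin with the case $w(R)>1$ and $h(R)>n^\alpha$; write $R=(x_1,x_2)\times(y_1,y_2)$ and recall $R\subseteq{\rm conv}(A_n)=[0,14n]^2$. Since the open interval $(x_1,x_2)$ has length greater than $1$, it contains an integer $i_0$, necessarily with $0<i_0<14n$. Since the open interval $(y_1,y_2)$ has length greater than $n^\alpha$, it contains $n^\alpha$ consecutive integers, hence a complete residue system modulo $n^\alpha$; so I may pick an integer $j_0\in(y_1,y_2)$ with $j_0\equiv i_0\pmod{n^\alpha}$ and an integer $j_0'\in(y_1,y_2)$ with $j_0'\equiv -i_0\pmod{n^\alpha}$.

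Next I would verify that $(i_0,j_0)\in B_n\cap R$ lies on a forward diagonal and that $(i_0,j_0')\in B_n\cap R$ lies on a backward diagonal. For the forward case, set $k=((i_0-1)\bmod n^\alpha)+1\in\{1,\ldots,n^\alpha\}$ and $i=i_0-k$, $j=j_0-k$: then $n^\alpha\mid i$ by the choice of $k$, and $n^\alpha\mid j$ because $j_0\equiv i_0\equiv k\pmod{n^\alpha}$, so $(i_0,j_0)=(i+k,j+k)$ matches the forward-diagonal pattern in the definition of $B_n$. For the backward case one checks likewise that $(i_0,j_0')=(a+k,b-k)$ with $a=i_0-k$ and $b=j_0'+k$ both divisible by $n^\alpha$ (using $j_0'\equiv -k\pmod{n^\alpha}$). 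The one point to watch is that the anchor grid points $(i,j)$ and $(a,b)$ lie in $A_n$, i.e.\ have nonnegative coordinates: $i=a=i_0-1-((i_0-1)\bmod n^\alpha)\ge 0$ trivially, $b=j_0'+k\ge 0$ trivially, and $j=j_0-k\ge 0$ because $j_0\ge 1$ is congruent to $k\in\{1,\ldots,n^\alpha\}$ modulo $n^\alpha$ (if $j_0\le n^\alpha$ then $j_0=k$, otherwise $j_0-k>0$); this is where the containment $R\subseteq{\rm conv}(A_n)$ enters.

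The remaining case $w(R)>n^\alpha$ and $h(R)>1$ is proved in exactly the same way with the roles of the two coordinates exchanged: $(y_1,y_2)$ now contributes an integer $j_0$, while $(x_1,x_2)$, having length greater than $n^\alpha$, contains integers $i_0\equiv j_0$ and $i_0'\equiv -j_0$ modulo $n^\alpha$, and one checks exactly as above that $(i_0,j_0)$ and $(i_0',j_0)$ lie in $B_n\cap R$ on a forward and a backward diagonal, respectively.

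I do not anticipate a real obstacle. The argument reduces to two elementary facts about open intervals --- one of length greater than $1$ contains an integer, and one of length greater than $n^\alpha$ contains a complete residue system modulo $n^\alpha$, which is precisely why the hypothesis is stated with strict inequalities --- together with the routine check in the second paragraph that the chosen lattice point sits on an honest diagonal segment of $B_n$ rather than merely on the slope-$\pm 1$ line through some grid point.
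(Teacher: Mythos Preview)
Your proof is correct and follows essentially the same idea as the paper's: on any vertical line $x=i_0$ (resp., horizontal line $y=j_0$), the forward-diagonal points and the backward-diagonal points of $B_n$ each recur with period exactly $n^\alpha$, so an open segment of length greater than $n^\alpha$ in the long direction must hit one of each. The paper states this in a single sentence without spelling out the residue computation or the backward-diagonal case, whereas you have written out the congruence bookkeeping (including the nonnegativity of the anchor coordinates) explicitly; but the underlying argument is the same.
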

\begin{proof}
On any vertical line $\ell: x=i$ (resp., horizontal line $\ell: y=i)$, $0\leq i\leq 14n$, the distance between two consecutive points on forward diagonals is $n^\alpha$. If $h(R)\geq n^\alpha$, then $R$ contains a point of a forward diagonal on any line $\ell: x=i$ that intersects $R$. Similarly, if $w(R)\geq n^\alpha$, then $R$ contains a point of a forward diagonal on any line $\ell: y=i$ that intersects $R$.
\end{proof}

In our embedding algorithms (in Section~\ref{sec:alg3}), we translate some points $p\in B_n$ vertically or horizontally by
$n^\alpha$. We note here that the translated image of $p\in B_n$ is either in $B_n$ or outside of the bounding box of $B_n$.

\begin{lemma}\label{lem:shift}
Let $p\in B_n$, and translate $p$ by a horizontal or vertical vector of length $n^\alpha$
to another point $p'$. If $p'\in {\rm conv}(B_n)$, then $p'\in B_n$.
\end{lemma}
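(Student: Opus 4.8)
The plan is to examine the three defining properties of $B_n$ and check that each is preserved under a horizontal or vertical shift by $n^\alpha$, provided the image stays inside the bounding box. Write $p=(x,y)$ and let $p'$ be obtained by shifting $p$ by $\pm n^\alpha$ in one coordinate. Since $n^\alpha\mid n^\alpha$, the shift preserves divisibility by $n^\alpha$ in the coordinate it touches, and leaves the other coordinate unchanged. I would treat the three cases of the definition in turn.

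First, suppose $p$ satisfies $n^\alpha\mid xy$. If $p$ lies in a full row or full column (i.e.\ $n^\alpha\mid x$ or $n^\alpha\mid y$), then the shifted point $p'$ still has the corresponding coordinate divisible by $n^\alpha$ (the shift is by a multiple of $n^\alpha$, or it is in the untouched coordinate), so $p'$ still lies in a full row or full column and hence $p'\in B_n$. If $p$ is not on a full row or full column, then $p'$ differs from $p$ by $\pm n^\alpha=\pm 2^q$ in one coordinate; here I would argue via the $2$-adic valuation: if $x=2^a x'$ with $a<q$ and $x'$ odd, then $x\pm 2^q=2^a(x'\pm 2^{q-a})$ has the same valuation $a$, so the product $x'y$ (appropriately re-interpreted) keeps enough factors of $2$ — more precisely, $\nu_2(x)+\nu_2(y)\geq q$ is preserved when we replace $x$ by $x\pm 2^q$ as long as $\nu_2(x)<q$, and the case $\nu_2(x)\geq q$ is exactly the full-row/full-column case already handled. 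The second and third cases, the forward and backward diagonals, are the main point: a diagonal point is $(i+k,j\pm k)$ with $n^\alpha\mid i$, $n^\alpha\mid j$, $1\le k\le n^\alpha$. Shifting horizontally by $n^\alpha$ gives $(i+n^\alpha+k,j\pm k)=((i+n^\alpha)+k,j\pm k)$, which is again a diagonal point anchored at the lattice point $(i+n^\alpha,j)$ (still a multiple of $n^\alpha$ in both coordinates), with the same $k$; similarly shifting vertically by $\pm n^\alpha$ re-anchors $j$. Thus $p'$ lies on a forward (resp.\ backward) diagonal, so $p'\in B_n$.

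The one place where care is needed is the boundary: the re-anchored lattice point $(i\pm n^\alpha,j)$ or $(i,j\pm n^\alpha)$ must itself lie in $A_n=\{0,\dots,14n\}^2$ for the argument to place $p'$ in $B_n$ rather than merely in the lattice plane. This is exactly where the hypothesis $p'\in\mathrm{conv}(B_n)$ is used: since the full rows and full columns (those indexed by multiples of $n^\alpha$) reach to the extreme rows/columns $0$ and $14n$ of $A_n$, the convex hull of $B_n$ equals $\mathrm{conv}(A_n)$, so $p'\in\mathrm{conv}(B_n)$ forces $0\le$ each coordinate of $p'\le 14n$; and then the anchor point, which is obtained from $p'$ by subtracting the offset $k\in\{0,1,\dots,n^\alpha\}$ (or by rounding the shifted coordinate down to the nearest multiple of $n^\alpha$), also lies in $A_n$. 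I expect the main obstacle to be bookkeeping the case distinctions cleanly — in particular making sure that a point can be shifted ``out'' of one diagonal family and land on a point that is covered by one of the three rules (and not fall through the cracks), and handling the corner cases where $k$ becomes $0$ or $n^\alpha$ after the shift, where the diagonal point coincides with a point of a full row/column. None of the individual verifications is deep; the lemma is really just the observation that all three defining patterns are invariant under the group of translations generated by $(n^\alpha,0)$ and $(0,n^\alpha)$, intersected with the bounding box.
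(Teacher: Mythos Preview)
Your proposal is correct and follows the same case-by-case approach as the paper, verifying that each defining pattern of $B_n$ is invariant under translation by $(\pm n^\alpha,0)$ or $(0,\pm n^\alpha)$. Two simplifications: the first case needs no $2$-adic valuations or full-row/full-column split, since $(i\pm n^\alpha)j = ij \pm n^\alpha j$ is immediately divisible by $n^\alpha$; and your boundary worry about the re-anchored lattice point is a non-issue, because the diagonal clauses only require $n^\alpha\mid i$ and $n^\alpha\mid j$ (the anchor need not lie in $A_n$), so the hypothesis $p'\in\mathrm{conv}(B_n)=\mathrm{conv}(A_n)$ is exactly what is needed.
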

\begin{proof}
Assume first that $p=(i,j)\in A_n$ such that $n^\alpha|ij$. It is clear that $n^\alpha|(i\pm n^\alpha)j$
and $n^\alpha| i(j\pm n^\alpha)$. Assume now that $p=(i+k,j+k)\in A_n$ such that $n^\alpha|i$ and $n^\alpha|j$. Then $n^\alpha|i\pm n^\alpha$ and $n^\alpha| j\pm n^\alpha$.
In both cases, if $p'$ is still within ${\rm conv}(B_n)$, then it is also in $B_n$.
\end{proof}

\subsection{Properties of Stretched Grids}

The purpose of transformation $\tau$ is to establish the following
property for the stretched grid $\tau(A_n)$.

\begin{lemma}\label{lem:strechedsegment}
Let $(a_1,b_1), (a_2,b_2), (a_3,b_3)\in A_n$ such that $(a_2,b_2)$ lies in the interior of the axis-aligned rectangle spanned by $(a_1,b_1)$ and $(a_3,b_3)$ (formally, $a_1<a_2<a_3$ and either $b_1<b_2<b_3$ or $b_3<b_2<b_1$). Then $\tau(a_2,b_2)$ lies below the line segment between $\tau(a_1,b_1)$ and $\tau(a_3,b_3)$. (See Fig.~\ref{fig:stretching}.)
\end{lemma}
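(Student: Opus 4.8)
The plan is to prove Lemma~\ref{lem:strechedsegment} by an explicit computation with the convexity of the exponential map $y\mapsto (28n)^y$. Write $N=28n$ so that $\tau(x,y)=(x,N^y)$. The three images are $P_1=(a_1,N^{b_1})$, $P_2=(a_2,N^{b_2})$, $P_3=(a_3,N^{b_3})$ with $a_1<a_2<a_3$. The point on the segment $P_1P_3$ directly above $P_2$ (same $x$-coordinate $a_2$) has $y$-coordinate
$$
\ell(a_2)=N^{b_1}+\frac{a_2-a_1}{a_3-a_1}\bigl(N^{b_3}-N^{b_1}\bigr).
$$
So it suffices to show $N^{b_2}<\ell(a_2)$. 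Setting $t=\frac{a_2-a_1}{a_3-a_1}\in(0,1)$, this is $N^{b_2}<(1-t)N^{b_1}+tN^{b_3}$, i.e.\ the claim follows if $b_2\le (1-t)b_1+tb_3$, by strict convexity of the exponential (with equality in the exponent already giving strict inequality in the values, since $b_1\neq b_3$).

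The convexity bound is not automatic, however, because $t$ depends on the $x$-coordinates while the inequality we want concerns the $y$-coordinates, and a priori $t$ need not equal $\frac{b_2-b_1}{b_3-b_1}$. This is where the integrality and the magnitude of $N$ enter, and I expect this to be the main obstacle. The key observation is that $b_1,b_2,b_3$ are integers in $\{0,1,\dots,14n\}$ with $b_2$ strictly between $b_1$ and $b_3$, so $|b_2-b_1|\ge 1$ and $|b_3-b_2|\ge 1$; meanwhile $a_1,a_2,a_3$ are integers with $a_1<a_2<a_3\le 14n$, so $t=\frac{a_2-a_1}{a_3-a_1}\ge \frac{1}{14n}$ and $1-t\ge\frac{1}{14n}$.

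Concretely, I would argue as follows. Assume first the increasing case $b_1<b_2<b_3$; the decreasing case is symmetric (replace $y$ by a reflection, or just redo the one-line estimate). We must beat $\ell(a_2)=(1-t)N^{b_1}+tN^{b_3}$ from below by $N^{b_2}$. Since $b_3\ge b_2+1$ and $t\ge\frac{1}{14n}=\frac{2}{N}$, we have
$$
tN^{b_3}\ge \frac{2}{N}\cdot N^{b_2+1}=2N^{b_2}>N^{b_2},
$$
so $\ell(a_2)>tN^{b_3}>N^{b_2}$, which is exactly what we need; the term $(1-t)N^{b_1}\ge 0$ only helps. In the decreasing case $b_3<b_2<b_1$ we instead use $b_1\ge b_2+1$ and $1-t\ge \frac{2}{N}$ to get $(1-t)N^{b_1}\ge 2N^{b_2}>N^{b_2}$, hence again $\ell(a_2)>N^{b_2}$. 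Therefore $\tau(a_2,b_2)=(a_2,N^{b_2})$ lies strictly below the segment $\tau(a_1,b_1)\tau(a_3,b_3)$, as claimed.

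I should double-check the edge conditions: we need $a_3-a_1\ge 1$ so that $t$ is well-defined and $a_2-a_1\ge 1$ so that $t\ge 1/(14n)$ — both hold since the $a_i$ are distinct integers; and we need $N=28n\ge 2\cdot 14n$, which is an equality, so the bound $t\ge 2/N$ is tight but valid. No appeal to the sparse-grid structure $B_n$ is needed here; the statement holds for the full stretched grid $\tau(A_n)$, and the proof uses only that coordinates are integers in $[0,14n]$ and that the vertical stretch factor is at least $28n$.
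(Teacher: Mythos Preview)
Your proof is correct and follows essentially the same approach as the paper: both exploit that the $b_i$ are integers (so $b_3\ge b_2+1$) together with the bound $a_3-a_1\le 14n$ and the stretch factor $28n$ to show the middle image falls below the chord. The paper packages the same estimate as a sign computation of the $3\times 3$ orientation determinant, whereas you parametrize the segment and compare $y$-coordinates directly; the underlying inequality $t\,N^{b_3}\ge \tfrac{2}{N}\cdot N^{b_2+1}=2N^{b_2}$ is identical in both.
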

\begin{proof}
We may assume that $b_1<b_2<b_3$, since the other case can be treated analogously.
Denote by $L$ the line through $\tau(a_1,b_1)$ and $\tau(a_3,b_3)$.
Consider the following function in two variables:
$$D(x,y)=\left|
\begin{matrix}
   1 & 1 &1 \\
   a_1& a_3 & x \\
   c_1& c_3 & y
\end{matrix} \right|
=a_3y-c_3x-a_1y+c_1x+a_1c_3-c_1a_3.$$
The function $D(x,y)$ is negative for all the points below
$L$ and positive for all the points above $L$.
For $\tau(a_2,b_2)$, we have
$D(a_2,c_2)= c_3(a_1-a_2)+a_2c_1+c_2(a_3-a_1)-c_1a_3< c_3(a_1-a_2)+a_2c_1+c_2a_3<-c_3+14nc_1+14nc_2< -(28n)^{b_3}+14n(28n)^{b_3-1}+14n(28n)^{b_3-1}=0$.
Hence $\tau(a_2,b_2)$ is below the line $L$, as required.
\end{proof}

\begin{figure}[htb]
\centering
\includegraphics[width=0.3\textwidth]{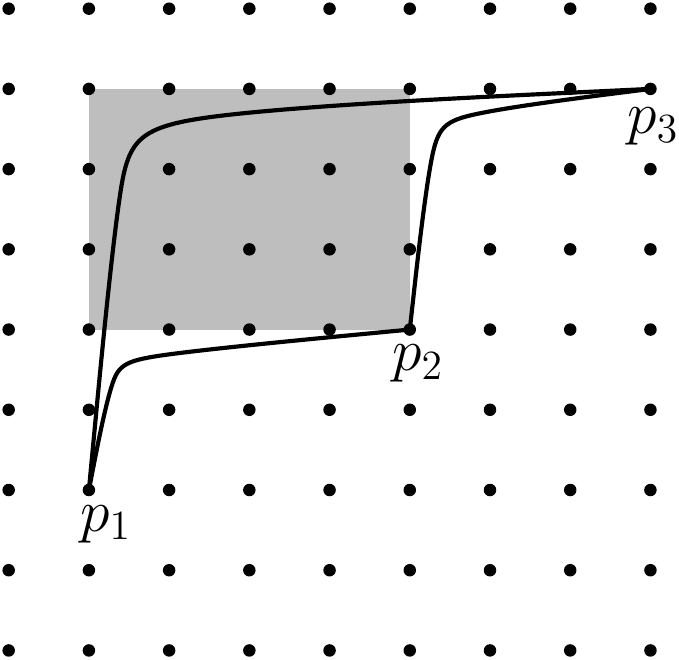}
\caption{A grid and three points $p_1=(a_1,b_1)$, $p_2=(a_2,b_2)$ and $p_3=(a_3,b_3)$ with $a_1<a_2<a_3$ and $b_1<b_2<b_3$. The Jordan arcs between the points represent straight-line segments between the stretched points $\tau(a_1,b_1)$, $\tau(a_2,b_2)$ and $\tau(a_3,b_3)$.
The rectangle $\Box(\Delta)$ defined for the triangle for $\Delta=\Delta(p_1,p_2,p_3)$ is shaded.}
\label{fig:stretching}
\end{figure}

For each triangle $\Delta$ determined by $(a_1,b_1)$, $(a_2,b_2)$, and $(a_3,b_3)$, we define an open axis-aligned rectangle $\Box(\Delta)$. Assume, by permuting the indices if necessary, that $\max(b_1,b_2)< b_3$, and let
$$\Box(\Delta)=(\min(a_1,a_2),\max(a_1,a_2))\times (\max(b_1,b_2), b_3).$$

\begin{lemma}\label{lem:rectangle}
Let $(a_1,b_1), (a_2,b_2), (a_3,b_3)\in A_n$ such that $\max(b_1,b_2)< b_3$.
Then for all $p\in A_n\cap \Box(\Delta)$, the point $\tau(p)$ lies in the
interior of the triangle determined by $\tau(p_1)$, $\tau(p_2)$, and $\tau(p_3)$.
\end{lemma}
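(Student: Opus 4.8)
The plan is to reduce the claim about the stretched triangle to three half-plane inclusions, one for each side of the triangle $\Delta(\tau(p_1),\tau(p_2),\tau(p_3))$, and to verify each inclusion using Lemma~\ref{lem:strechedsegment}. Write $p_i=(a_i,b_i)$ and recall that, by hypothesis, $b_3$ strictly exceeds both $b_1$ and $b_2$, so $p_3$ is the unique topmost of the three points. Let $p=(a,b)\in A_n\cap\Box(\Delta)$; then $\min(a_1,a_2)<a<\max(a_1,a_2)$ and $\max(b_1,b_2)<b<b_3$. The goal is to show $\tau(p)$ lies strictly inside the triangle $\tau(p_1)\tau(p_2)\tau(p_3)$.

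First I would handle the ``top two'' sides $p_1p_3$ and $p_2p_3$. Since $a$ lies strictly between $a_1$ and $a_2$, it lies strictly between $a_i$ and $a_3$ for $i=1,2$ as well (because $\Box(\Delta)$'s $x$-interval is nested inside the interval spanned by any $a_i$ and $a_3$ — here one should note $a$ need not literally be between $a_i$ and $a_3$ unless $a_3$ is outside $[\min(a_1,a_2),\max(a_1,a_2)]$, so I would first observe that $a_3$ cannot lie strictly between $a_1$ and $a_2$: if it did, then by Lemma~\ref{lem:strechedsegment} applied to $p_1,p_3,p_2$ the point $\tau(p_3)$ would lie below segment $\tau(p_1)\tau(p_2)$, contradicting that $p_3$ is the topmost vertex and hence $\tau(p_3)$ has the largest $y$-coordinate under $\tau$). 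Granted $a_3\notin(\min(a_1,a_2),\max(a_1,a_2))$, the point $a$ indeed lies strictly between $a_i$ and $a_3$, and $b$ lies strictly between $b_i$ and $b_3$ (from $b_i\le\max(b_1,b_2)<b<b_3$). Lemma~\ref{lem:strechedsegment} then gives that $\tau(p)$ lies strictly below segment $\tau(p_i)\tau(p_3)$ for $i=1,2$. These are exactly the two sides of the triangle incident to the apex $\tau(p_3)$, and ``below'' is the correct (interior) side because the third vertex $\tau(p_j)$ ($j\neq i$) has smaller $y$-coordinate, hence lies below the line through $\tau(p_i)$ and $\tau(p_3)$ as well.

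Next I would handle the bottom side $p_1p_2$, say with $b_1\le b_2$ without loss of generality. Here I want $\tau(p)$ to lie strictly \emph{above} segment $\tau(p_1)\tau(p_2)$. If $b_1<b_2$, then since $a$ is strictly between $a_1$ and $a_2$ but $b>b_2>\max(b_1,b_2)$ — wait, $b$ is above $b_2$, so $(a,b)$ is not inside the rectangle spanned by $p_1,p_2$; instead I compare directly: the point $(a,b_2')$ for the value $b_2'$ on segment $p_1p_2$ above $a$ satisfies $b_2'\le b_2<b$, and by Lemma~\ref{lem:strechedsegment} (applied with the auxiliary point on the segment) or by a direct sign computation with the determinant $D$ from that lemma's proof, $\tau(a,b)$ lies above $\tau(p_1)\tau(p_2)$. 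Concretely I would reuse the determinant argument: the line through $\tau(p_1),\tau(p_2)$ separates points by the sign of a $3\times 3$ determinant, and plugging in $\tau(a,b)$ with $b>\max(b_1,b_2)$ makes the dominant $(28n)^b$ term push the value to the correct sign, exactly as in Lemma~\ref{lem:strechedsegment}'s estimate. Combining the three half-plane memberships shows $\tau(p)$ is in the interior of the triangle.

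The main obstacle is the geometric bookkeeping in the bottom-side case: unlike the two top sides, the point $\tau(p)$ is not ``below'' a segment in the sense directly supplied by Lemma~\ref{lem:strechedsegment}, so I either need a clean ``mirror'' version of that lemma (for a point above a segment when its $y$-coordinate exceeds both endpoints and its $x$-coordinate lies between them) or I must redo the one-line determinant sign estimate. I would isolate this as a short sub-claim: \emph{if $a_1<a<a_2$ (or $a_2<a<a_1$) and $b>\max(b_1,b_2)$, then $\tau(a,b)$ lies strictly above the line through $\tau(a_1,b_1)$ and $\tau(a_2,b_2)$}; its proof is the same determinant computation as Lemma~\ref{lem:strechedsegment}, with the inequalities reversed. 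Everything else — the observation that $a_3$ is not between $a_1$ and $a_2$, and the two applications of Lemma~\ref{lem:strechedsegment} to the top sides — is routine.
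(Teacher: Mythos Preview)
Your reduction to three half-plane tests is exactly the paper's strategy, but the step where you ``observe that $a_3$ cannot lie strictly between $a_1$ and $a_2$'' is wrong, and the rest of your top-side argument rests on it.

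First, the observation is simply false: take $p_1=(0,0)$, $p_2=(4,0)$, $p_3=(2,5)$. Then $\max(b_1,b_2)=0<5=b_3$, so the hypothesis of the lemma holds, yet $a_3=2$ sits strictly between $a_1$ and $a_2$. Second, your justification misapplies Lemma~\ref{lem:strechedsegment}: that lemma requires the middle point to lie in the interior of the axis-aligned rectangle spanned by the other two, i.e.\ \emph{both} coordinates must be strictly between. Here $b_3>\max(b_1,b_2)$, so $p_3$ is never in the rectangle spanned by $p_1,p_2$, and the lemma gives you nothing. Once this claim fails, your assertion that ``$a$ lies strictly between $a_i$ and $a_3$ for $i=1,2$'' fails too (e.g.\ in the example above, a point $p=(3,1)\in\Box(\Delta)$ has $a=3>a_3=2$, so $a$ is not between $a_1$ and $a_3$), and you cannot invoke Lemma~\ref{lem:strechedsegment} for both sides $p_1p_3$ and $p_2p_3$ simultaneously.

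The paper does not try to rule out this configuration; instead it performs a separate case analysis on the position of $a_3$ relative to $a_1$ and, independently, relative to $a_2$ (six subcases in total, including $a_3=a_i$). In each subcase it checks that $\tau(p)$ and the opposite vertex lie on the \emph{same} side of the line through $\tau(p_i)$ and $\tau(p_3)$, sometimes ``below'' and sometimes ``above''; the ``above'' instances, like your bottom-side sub-claim, are handled by the mirror of Lemma~\ref{lem:strechedsegment} (the same determinant estimate with reversed inequalities). Your plan for the bottom side is fine and matches the paper; it is the two top sides that need the case split rather than the false shortcut.
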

\begin{proof}
By Lemma~\ref{lem:strechedsegment}, $\tau(a_3,b_3)$ all points $\tau(p)$, $p\in A_n\cap \Box(\Delta)$, lie above the line spanned by $\tau(a_1,b_1)$ and $\tau(a_2,b_2)$. Consider now the position of $a_3$ relative to $a_1$ and $a_2$.
Assume, without loss of generality, that $a_1<a_2$ (if $a_1=a_2$, then $\Box(\Delta)=\emptyset$, and the conclusion trivially holds.)

If $a_3=a_1$, then $\tau(a_1,b_1)$ and $\tau(a_3,b_3)$ span a vertical line, and the point $\tau(a_2,b_2)$ and all points $\tau(p)$, $p\in A_n\cap \Box(\Delta)$, lie to the right of this line. If $a_3<a_1$ (resp., $a_3>a_1$), then $\tau(a_2,b_2)$ and all points $\tau(p)$, $p\in A_n\cap \Box(\Delta)$, lie above (resp., below) the line spanned by $\tau(a_1,b_1)$ and $\tau(a_3,b_3)$ by Lemma~\ref{lem:strechedsegment}.

Similarly, If $a_3=a_2$, then $\tau(a_2,b_2)$ and $\tau(a_3,b_3)$ span a vertical line, and the point $\tau(a_1,b_1)$ and all points $\tau(p)$, $p\in A_n\cap \Box(\Delta)$, lie on the left of this line. If $a_3<a_2$ (resp., $a_3>a_2$), then $\tau(a_1,b_1)$ and all points $\tau(p)$, $p\in A_n\cap \Box(\Delta)$, lie below (resp., above) the line spanned by $\tau(a_2,b_2)$ and $\tau(a_3,b_3)$ by Lemma~\ref{lem:strechedsegment}. In all cases, all points $\tau(p)$, $p\in A_n\cap \Box(\Delta)$, lie in the interior of $\Delta$.
\end{proof}

\paragraph{Grid-embedding in a stretched grid.}
The grid-embedding algorithm by de~Fraysseix~et al.~\cite{FPP90} embeds every $n$-vertex planar graph on an $(2n-4)\times (n-2)$ section of the integer lattice. Their algorithms also works on the stretched grid in place of the integer grid.

Specifically, we use their result in the following form. Suppose that $G_m$ is a planar graph with $m\in \mathbb{N}$ vertices and endowed with a given combinatorial embedding in which $u$, $v$ and $z$ are the vertices of the outer face. Let $X,Y\subset \mathbb{N}$ be two sets of cardinality $|X|\geq 2m$ and $|Y|\geq m$. Then $G$ has a straight-line embedding such that the vertices are mapped to the stretched cross product $\tau(X\times Y)$ of size at least $2m^2$; the two endpoints of edge $uv$ are mapped to $\tau(\min X,\min Y)$ and
$\tau(\max X,\min Y)$, respectively; and $z$ is mapped to an arbitrary point
in the top row $\tau(X\times \max Y)$.
Furthermore, By Lemma~\ref{lem:strechedsegment}, we can shift $u$ or $v$ vertically down to another point of the stretched grid (while keeping all other vertices fixed) without introducing any edge crossings. Similarly, we can shift $z$ horizontally to any other point
of the stretched grid without introducing any edge crossings.

\begin{lemma}\label{lem:gridembed}
Let $G_m$ be a planar 3-tree with $m\in\mathbb{N}$ vertices and a combinatorial embedding  in which $u$, $v$ and $z$ are the vertices of the outer face.
Let $p_1, p_2, p_3\in B_n$, and let $R\subseteq \Box(\Delta(p_1,p_2,p_3))$ be a
rectangle such that $\area(R)> 8 m^2 n^\alpha$, $w(R)> 2m$, and $h(R)> m$.
Then $G_m$ admits a planar straight-line embedding such that $u$, $v$, and $z$
are mapped to $p_1$, $p_2$, and $p_3$, respectively, and the interior vertices of
$G_m$ are mapped to points $\tau(p)$, $p\in B_n\cap R$.
\end{lemma}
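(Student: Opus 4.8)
The plan is to apply the quoted grid-embedding result for stretched grids to a suitable stretched cross product $\tau(X\times Y)$ and then slide the three outer vertices into position; the crux is to choose $X,Y\subset\mathbb{N}$ with $|X|\ge 2m$, $|Y|\ge m$ such that $\min X=a_1$, $\max X=a_2$, $\max Y=b_3$, and every node of $X\times Y$ whose $x$-coordinate is neither $a_1$ nor $a_2$ and whose $y$-coordinate is not $b_3$ lies in $B_n\cap R$. Since the statement is invariant under simultaneously permuting $u,v,z$ together with $p_1,p_2,p_3$, I may assume the labelling used in the definition of $\Box(\Delta)$, i.e.\ $a_1<a_2$ and $\max(b_1,b_2)<b_3$; then $R=R_x\times R_y$ with $R_x\subseteq(a_1,a_2)$ and $R_y\subseteq(\max(b_1,b_2),b_3)$.

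For the construction I would use dyadic progressions. Recall $n^\alpha=2^q$. Let $a\ge0$ be the largest integer for which $R_x$ contains at least $2m-2$ multiples of $2^a$, and $b\ge0$ the largest for which $R_y$ contains at least $m-1$ multiples of $2^b$ (both exist since $w(R)>2m$ and $h(R)>m$). Put
\[
X=\{a_1\}\cup\{\text{multiples of }2^a\text{ in }R_x\}\cup\{a_2\},\qquad
Y=\{\text{multiples of }2^b\text{ in }R_y\}\cup\{b_3\},
\]
so $\min X=a_1$, $\max X=a_2$, $\max Y=b_3$, $|X|\ge 2m$, $|Y|\ge m$. Maximality of $a$ and $b$ gives $2^a>w(R)/(4m)$ and $2^b>h(R)/(2m)$, hence $2^{a+b}>\area(R)/(8m^2)>n^\alpha=2^q$, so $a+b\ge q$ --- this is precisely where the hypothesis $\area(R)>8m^2n^\alpha$ is used. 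Consequently, if $i$ is a multiple of $2^a$ and $j$ a multiple of $2^b$ then $2^{a+b}\mid ij$, so $n^\alpha\mid ij$ and $(i,j)\in B_n$; since also $(i,j)\in R_x\times R_y=R$, the set $X\times Y$ has the required property.

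Next I would invoke the quoted result on $\tau(X\times Y)$. Since $G_m$ is a planar $3$-tree, hence a triangulation, with outer face $uvz$, I may run the de~Fraysseix--Pach--Pollack construction with $v_1=u$, $v_2=v$, $v_m=z$ in the canonical order; this produces a planar straight-line drawing that realises the given combinatorial embedding, with $u\mapsto\tau(a_1,\min Y)$, $v\mapsto\tau(a_2,\min Y)$, $z$ at some point $\tau(x^\ast,b_3)$ of the top row, and every other vertex at a node of $\tau(X\times Y)$. As in that construction, $u$ is the unique vertex in the leftmost column, $v$ the unique one in the rightmost column, $u$ and $v$ the only vertices in the bottom row, and $z$ the only one in the top row; hence every remaining vertex occupies a node of $X\times Y$ off the lines $x=a_1$, $x=a_2$ and $y=b_3$, so by the previous paragraph it maps into $\tau(B_n\cap R)$. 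Finally, since $\min Y>\max(b_1,b_2)\ge b_1,b_2$, I shift $u$ vertically down onto $\tau(a_1,b_1)=\tau(p_1)$, shift $v$ vertically down onto $\tau(a_2,b_2)=\tau(p_2)$, and shift $z$ horizontally onto $\tau(a_3,b_3)=\tau(p_3)$; by Lemma~\ref{lem:strechedsegment} these are exactly the admissible shifts, they create no crossing, and they leave the combinatorial embedding unchanged. By Lemma~\ref{lem:rectangle}, each interior vertex --- now at $\tau(p)$ with $p\in B_n\cap R\subseteq A_n\cap\Box(\Delta)$ --- lies in the interior of the triangle $\tau(p_1)\tau(p_2)\tau(p_3)$, which is therefore the outer face of the final drawing, as required.

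I expect the main obstacle to be the construction of $X$ and $Y$: one must fit a \emph{full} $2m\times m$ stretched cross product inside $R$ with every node lying in the extremely sparse set $B_n$, while simultaneously forcing $\min X$, $\max X$, $\max Y$ onto the prescribed values $a_1$, $a_2$, $b_3$. The dyadic device --- indexing columns and rows by multiples of $2^a$ and $2^b$ with $a+b\ge q$, so that products of the indices are multiples of $n^\alpha$ and hence lie in $B_n$ --- is what reconciles ``dense enough to carry an $m$-vertex planar graph'' with ``sparse enough to be contained in $B_n$'', and the bound $\area(R)>8m^2n^\alpha$ is calibrated so that the counting closes. A secondary technical point is verifying that the de~Fraysseix--Pach--Pollack placement uses the extreme rows and columns only for $u$, $v$ and $z$; this is what lets the $m-3$ interior vertices avoid the anchor lines $x=a_1$, $x=a_2$, $y=b_3$, whose points need not belong to $B_n$.
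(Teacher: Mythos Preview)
Your proposal is correct and follows essentially the same approach as the paper: locate a $2m\times m$ dyadic cross product inside $B_n\cap R$ by taking columns that are multiples of $2^a$ and rows that are multiples of $2^b$ with $a+b\ge q$, run de~Fraysseix--Pach--Pollack on the stretched cross product, and then slide the three outer vertices into place. The only cosmetic differences are that the paper fixes $a+b=q$ directly (choosing a single $k$ with $w(R')>2^k\cdot 2m$ and $h(R')>2^{q-k}\cdot m$) rather than maximizing $a$ and $b$ separately, and that the paper keeps all of $X\times Y$ inside $B_n\cap R$ and defers the outer-vertex placement to the preamble, whereas you adjoin $a_1,a_2,b_3$ to $X,Y$ and explicitly argue that FPP puts no interior vertex on those anchor lines---your treatment of the shifting is in fact a bit more careful than the paper's.
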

\begin{proof}
It is enough to show that $B_n\cap R$ contains a cross
product $X\times Y$ such that $|X|\geq 2m$ and $|Y|\geq m$.
Recall that $n^\alpha=2^q$ for some $q\in \mathbb{N}$.
By decreasing the width and height of $R$, we obtain
a rectangle $R'\subseteq R$ with
width $w(R')> 2^k(2m)$, height $h(R')> 2^{q-k}m$, and
$\area(R')> 2^q(2m^2)$, for some integer $k$, \
$0\leq k\leq q$. The points $(i,j)\in R'$, with $i\equiv 0\mod 2^k$
and $j\equiv 0\mod 2^{q-k}$ are in $B_n\cap {\rm int}(R)$ and form a
required cross product $X\times Y$ such that $|X|\geq 2m$ and $|Y|\geq m$.

We can now embed $u$, $v$, and $z$ at $p_1$, $p_2$, and $p_3$, respectively.
The algorithm by de~Fraysseix~et al.~\cite{FPP90} embeds the interior vertices
of $G_m$ to points $\tau(p)$, $p\in X\times Y\subset B_n\cap R$,
as required.
\end{proof}

\section{Embedding Algorithm\label{sec:alg3}}

In this section, we show that for $\alpha=1/2$, every $n$-vertex planar 3-tree admits a straight-line embedding such that the vertices are mapped to the set $S_n$ of size $\Theta(n^{3/2}\log n)$.

\paragraph{Overview.}
Let $G$ be a planar 3-tree with $n$ vertices. We describe our embedding algorithm in term of the ``unstretched'' grid $B_n$. The function $\tau$ maps this embedding into a straight-line embedding into $S_n$. Our embedding algorithm is guided by the tree $T=T(G)$, which represents an incremental process that constructs $G$ from a single triangle. Recall that $T_\Delta$ denotes the subtree of $T$ rooted at a node $\Delta$; and $V_\Delta$ denotes the set of vertices of $G$ that correspond to nodes in $T_\Delta$.

Our algorithm \emph{processes} the nodes $\Delta\in V(T)$ in a breath-first traversal of $T$. When a triangle $\Delta$ is already embedded in the point set $S_n$, then the rectangle $\Box(\Delta)$ is well defined, the vertices $V_\Delta$ are mapped to points in $B_n\cap \Box(\Delta)$. In order to maintain additional properties (invariant~$I_2$ below), we also maintain an open rectangle $R(\Delta)\subseteq \Box(\Delta)$, and require that the vertices $V_\Delta$ be mapped into $B_n\cap R(\Delta)$. Intuitively, $R(\Delta)$ is the region ``allocated'' for the vertices in $V_\Delta$.

\begin{figure}[htb]
\centering
\includegraphics[width=0.75\textwidth]{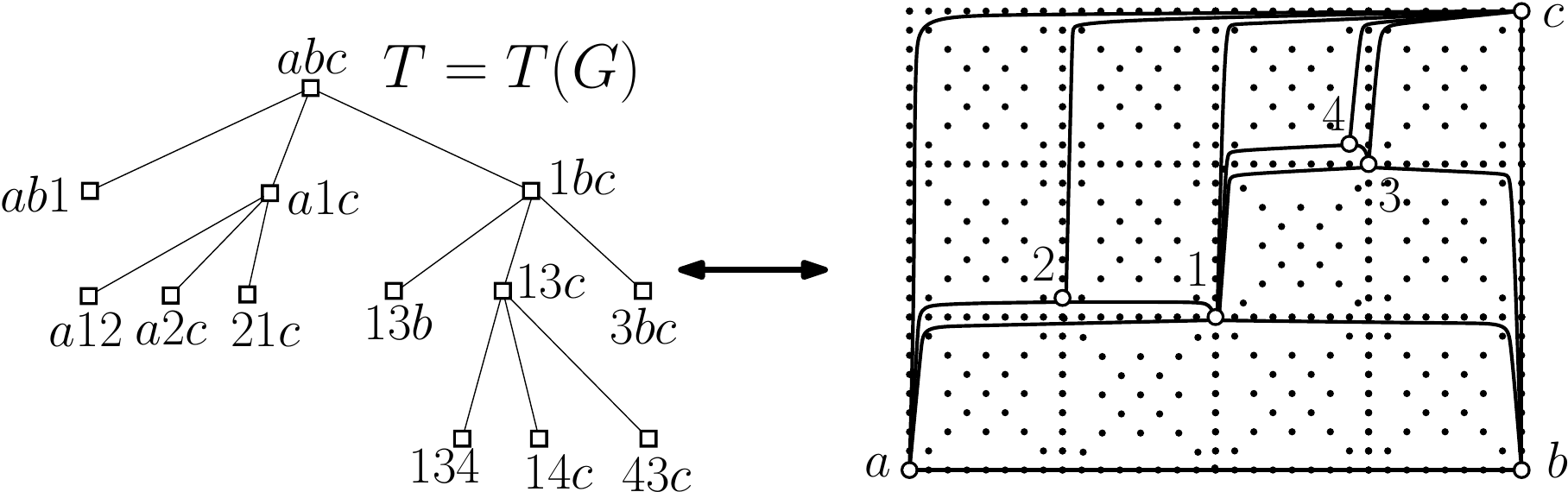}
\caption{The embedding of a 3-tree $G$ from Fig.~\ref{fig:tree3} on a sparse grid.}
\label{fig:vertexembedding3}
\end{figure}

When the breath-first traversal of $T$ reaches a node $\Delta\in V(T)$ such that $\area(R(\Delta))> 8 n^\alpha \weight^2(\Delta)$, $w(R(\Delta))> 2\weight(\Delta)$, and $h(R(\Delta))> \weight(\Delta)$, then we complete the embedding of the vertices $V_\Delta$ by Lemma~\ref{lem:gridembed}. We call the set of nodes of $T$ where these conditions are first satisfied the \emph{fringe} of $T$. We show below that Lemma~\ref{lem:gridembed} becomes applicable by the time $\weight(\Delta)$ drops below $n^{1-\alpha}$. For nodes $\Delta\in V(T)$ below the fringe, there is no need to assign rectangles $R(\Delta)$.

When we process a node $\Delta\in V(T)$ that is a parent of a hub, we shift some of the previously embedded vertices in horizontal or vertical direction (as described below), and shift the corers of rectangles $R(\Delta')$ corresponding to previously processed nodes $\Delta'\in V(T)$, as well. Each shift operation changes the $x$- or $y$-coordinate of a point (with respect to the ``unstretched'' point set $B_n$) by $0$, $n^\alpha$, or $2n^\alpha$. The number of hubs is at most $2n^{1-\alpha}$ by Lemma~\ref{lem:hubs}, so the $x$- and $y$-coordinate of each vertex may be shifted by at most $2n^{1-\alpha}\cdot 2n^\alpha=4n$. To allow sufficient space for these operations, we initially start with a $10n\times 10n$ section of the sparse grid $B_n$, and the shift operations may expand the bounding box to up to $14n\times 14n$.

\paragraph{Invariants.}
For all nodes $\Delta\in V(T)$ on or above the fringe of $T$, we maintain the following invariants.
\begin{enumerate}\itemsep -2pt
\item[$I_1$]
$R(\Delta)\subseteq \Box(\Delta)$.
\item[$I_2$]
If $\weight(\Delta)\geq n^{\alpha}$, then the lower-left corner of $R(\Delta)$ is in a forward diagonal,
and the lower-right corner of $R(\Delta)$ is in a backward diagonal of $B_n$.
\item[$I_3$]
At least one of the following two conditions is satisfied:
\begin{eqnarray}
\area(R(\Delta)) &\geq &100 n\weight(\Delta);\label{eq:I3}\\
\area(R(\Delta)) &>& 8 n^\alpha\weight^2(\Delta), w(R(\Delta))> 2\weight(\Delta),
\mbox{ \rm and }h(R(\Delta))> \weight(\Delta).\label{eq:I4}
\end{eqnarray}
\end{enumerate}
Condition \eqref{eq:I4} implies that $\Delta$ is on the fringe, and the vertices $V_\Delta$ can be embedded by Lemma~\ref{lem:gridembed}.

\paragraph{Initialization.}
Denote by $abc$ the initial triangle of $G$, with $a$ labeled left, $b$ labeled right and $c$ labeled top. Then we have $T=T_{abc}$. Let $R(abc)$ be the interior of bounding box of a $10n\times 10n$ section of $B_n$. Embed $a$ and $b$ to the lower-left and lower-right corners of $R(abc)$, respectively. Embed $c$ in the upper-right corner of $R(abc)$ (see Fig.~\ref{fig:vertexembedding3}). It is clear that invariants $I_1$--$I_3$ are satisfied for $abc$.

\paragraph{One recursive step.}
Assume that the vertices of triangle $\Delta\in V(T)$ have already been embedded and we are given a rectangle $R(\Delta)$ satisfying invariants $I_1$--$I_3$. If $\Delta$ is on the fringe of $T$, then the embedding of the vertices $V_\Delta$ is completed by Lemma~\ref{lem:gridembed}, and the subtree $T_\Delta$ is removed from further consideration.

In the remainder of this section, we assume that node $\Delta$ is strictly above the fringe, where invariant~$I_3$ is satisfied with \eqref{eq:I3} rather than \eqref{eq:I4}. Note that $w(R(\Delta))\leq 14n$ and $h(R(\Delta))\leq 14n$ since the bounding box of $B_n$ is a $14n\times 14n$ square. These upper bounds, combined with \eqref{eq:I3}, yield the following lower bounds for heavy nodes:
\begin{equation}\label{eq:I33}
w(R(\Delta))> 7\weight(\Delta)
\hspace{1cm} \mbox{ \rm and } \hspace{1cm}
h(R(\Delta))> 7\weight(\Delta).
\end{equation}
Since $\Delta$ does not satisfy \eqref{eq:I4} despite \eqref{eq:I33}, we have $\area(R(\Delta))\leq 8n^\alpha\weight^2(\Delta)$. This, combined with \eqref{eq:I3}, yields
\begin{equation}\label{eq:heavy}
\weight(\Delta)\geq 12.5 n^{1-\alpha}.
\end{equation}

\paragraph{Ideal location for a vertex.}
Denote the bottom, left and right child of $\Delta$, respectively, by $\Delta_1$, $\Delta_2$ and $\Delta_3$. Suppose that $R(\Delta)= (a,b)\times (c,d)$.
We wish to place the vertex $v$ corresponding to $\Delta$ at some point $p\in B_n\cap R(\Delta)$. The point $p\in B_n\cap R(\Delta)$ subdivides $R(\Delta)$ into a bottom, left, and right rectangle, $\Box(\Delta_1)\cap R(\Delta)$, $\Box(\Delta_2)\cap R(\Delta)$, and $\Box(\Delta_3)\cap R(\Delta)$ corresponding to $\Delta_1$, $\Delta_2$, and $\Delta_3$. We choose an ``ideal'' location for $v$ (which may not a point in $B_n$) that would ensure that the area of $R(\Delta)$ is distributed among the rectangles $R(\Delta_i)$, $i=1,2,3$, proportionally to $\weight(\Delta_i)$, maintaining \eqref{eq:I3}. Refer to Fig.~\ref{fig:snapping-0}. Recall that
$$\weight(\Delta)=\weight(\Delta_1)+\weight(\Delta_2)+\weight(\Delta_3)+1.$$
Partition the area of $R(\Delta)$ into a top and a bottom part by a horizontal line $\ell_y$ in ratio
$$\left(\weight(\Delta_2)+\weight(\Delta_3)+\frac{2}{3}\right)
: \left(\weight(\Delta_1)+\frac{1}{3}\right).$$
Partition the top area of $R(\Delta)$ into a left and a right part by a vertical line $\ell_x$ in ratio
$$\left(\weight(\Delta_2)+\frac{1}{3}\right) : \left(\weight(\Delta_3)+\frac{1}{3}\right).$$
The ideal location for vertex $v$ is the intersection point $\ell_x\cap \ell_y$. Note that the ideal location is in the interior of $R(\Delta)$ even if $\weight(\Delta_1)$, $\weight(\Delta_3)$, or $\weight(\Delta_3)$ is 0. If we place vertex $v$ at $\ell_x\cap \ell_y$, then $R(\Delta)$ is partitioned among the rectangles $R(\Delta_i)$, $i=1,2,3$, such that
\begin{equation}\label{eq:area}
\area(R(\Delta_i))\geq \area(R(\Delta))\frac{\weight(\Delta_i)}{\weight(\Delta)}.
\end{equation}
It is clear that if $\Delta$ satisfies~\eqref{eq:I4} and its children satisfy \eqref{eq:area} for $i=1,2,3$,
then the children also satisfy \eqref{eq:I4}.

Unfortunately, the ideal location is not necessarily in $B_n$ (and often not in $A_n$). Inevitably, some of the constraints have to be relaxed. We distinguish several cases: When one of the children of $\Delta$ is a hub, we snap vertex $v$ to a point in $B_n$ near the ideal location, and shift some of the previously embedded vertices (as explained below) to restore the lower bounds in \eqref{eq:area}. When none of the children of $\Delta$ is a hub, then two or three of its children are light by Lemma~\ref{lem:sibling}: we maintain invariant $I_3$ by establishing either \eqref{eq:I3} or \eqref{eq:I4} for every child of $\Delta$.

\paragraph{Case~1: one of the children of $\Delta$ is a hub.}
We wish to embed the vertex $v$ corresponding to $\Delta$ so that invariant~$I_2$ is maintained for all children of $\Delta$ (at least one of them is a hub). We will shift the previously embedded vertices and rectangles by $0$, $n^\alpha$ or $2n^\alpha$, and embed the vertex $v$ corresponding to $\Delta$ into a point ``near'' its ideal location.

\begin{figure}[htb]
\centering
\subfloat[]{
\label{fig:snapping-0}
\includegraphics[height=1.7in]{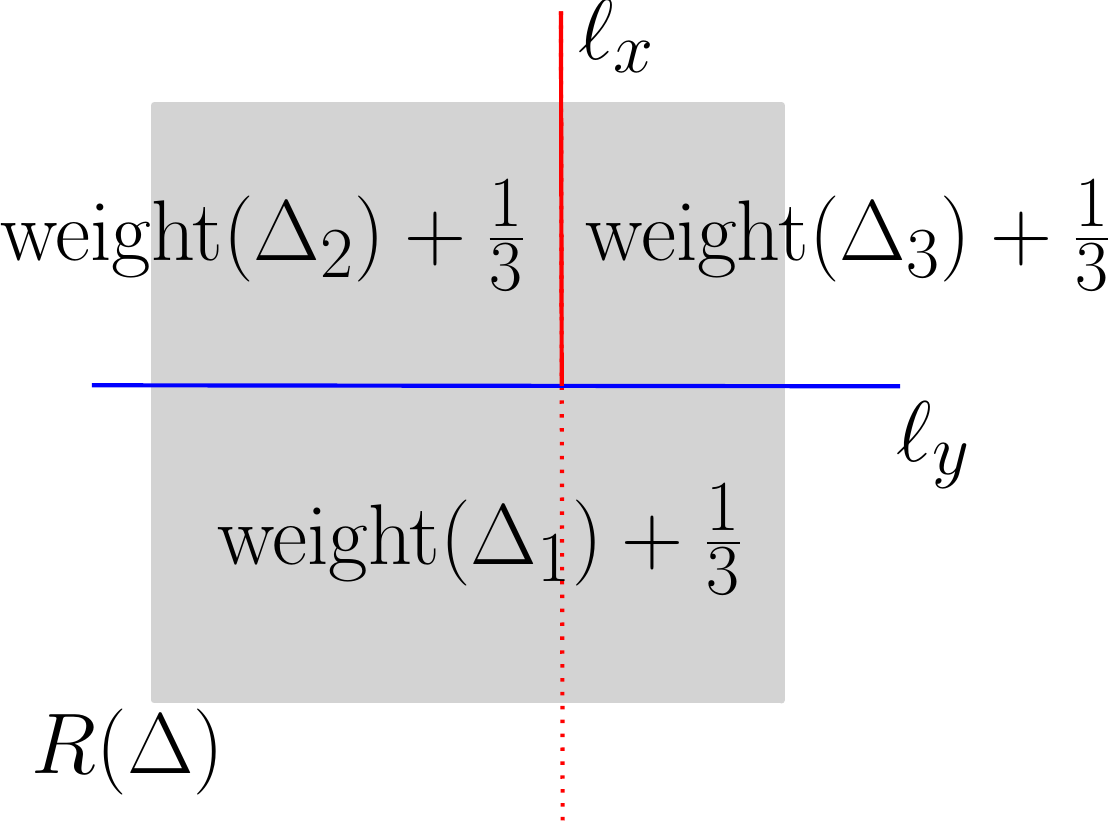}
}
\hspace{2cm}
\subfloat[]{
\label{fig:snapping-a}
\includegraphics[height=1.8in]{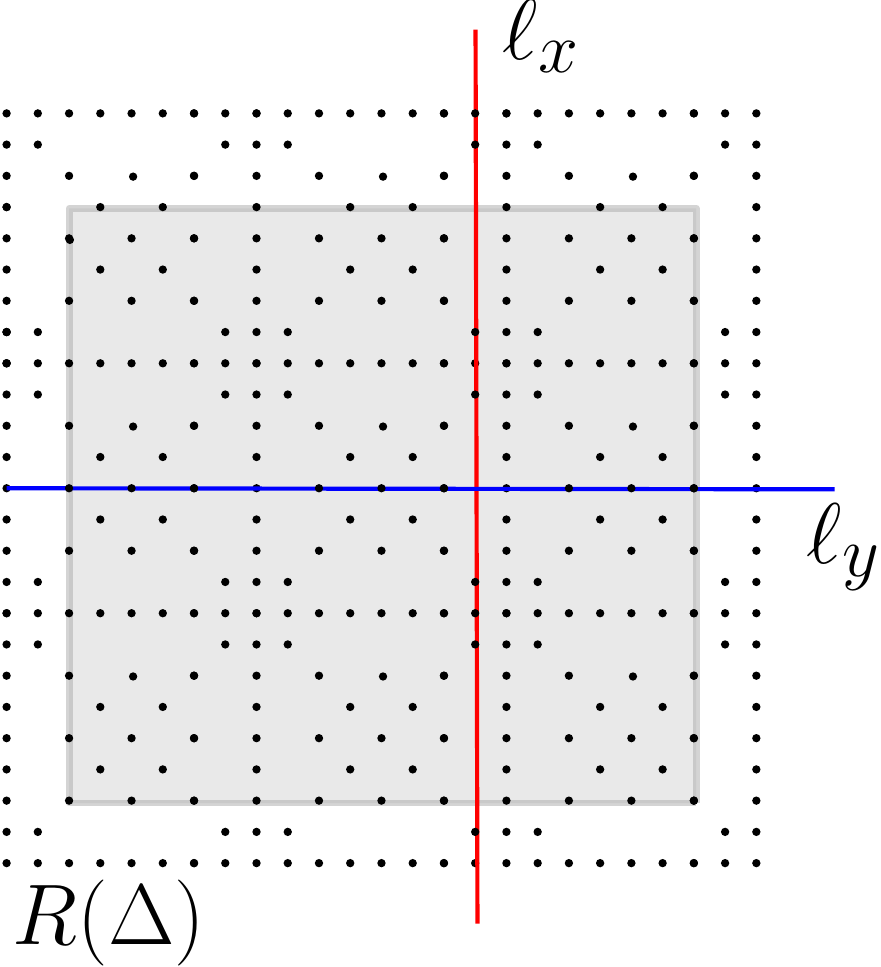}
}
\vspace{\baselineskip}
\subfloat[]{
\label{fig:snapping-b}
\includegraphics[height=2in]{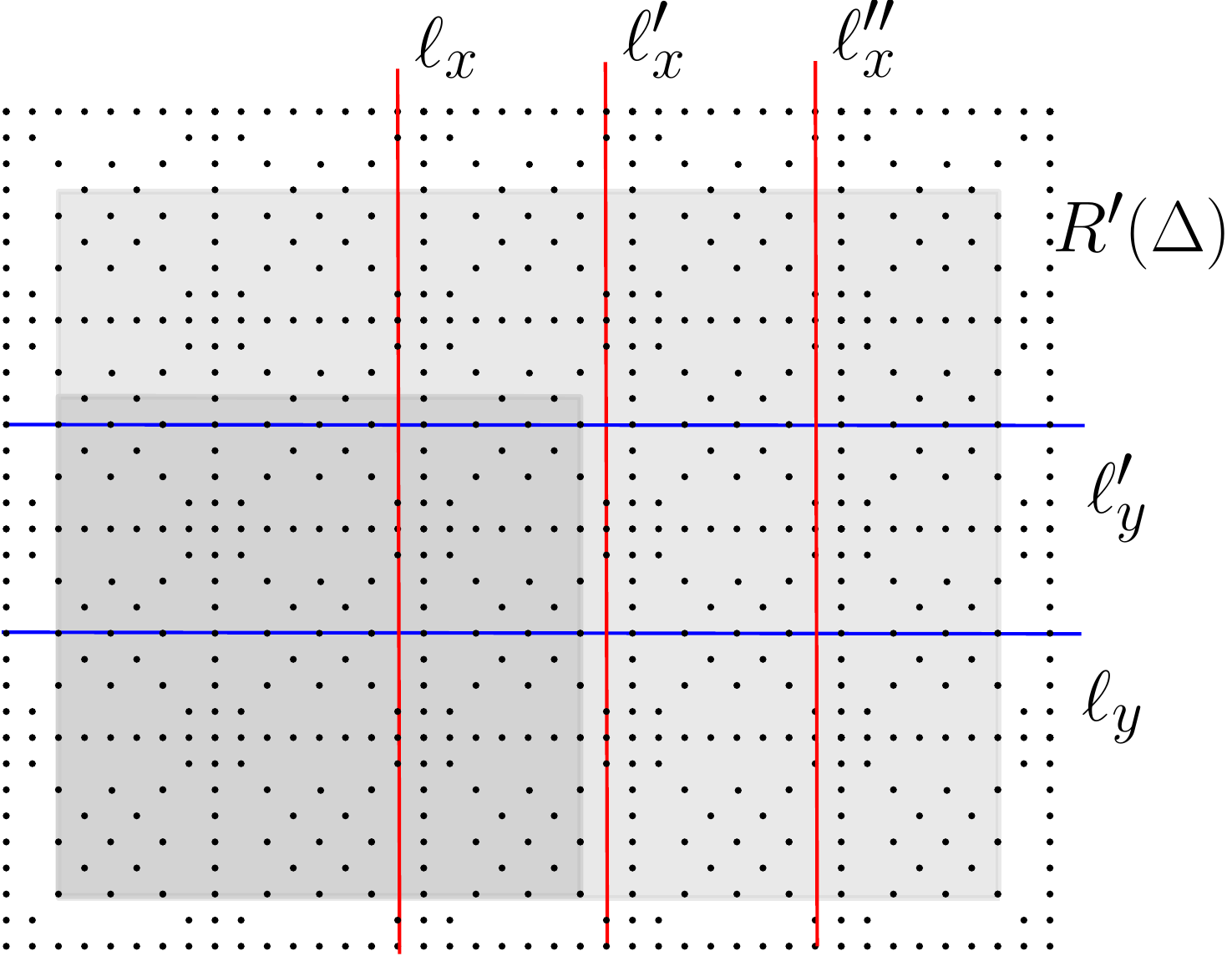}
}
\hspace{1cm}
\subfloat[]{
\label{fig:snapping-c}
\includegraphics[height=2in]{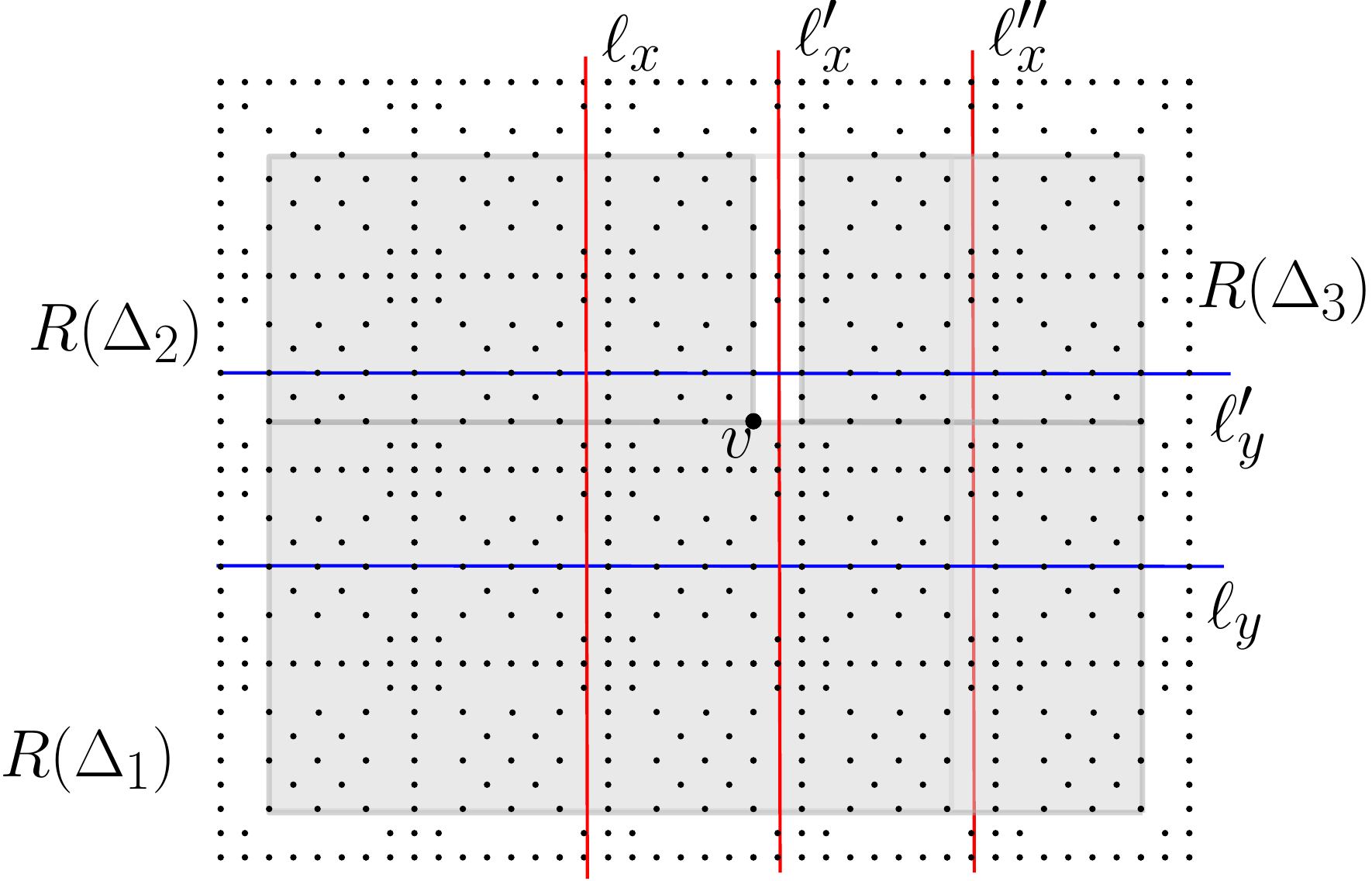}
}
\caption{
(a) The lines $\ell_x$ and $\ell_y$ partition $\area(R(\Delta))$ proportionally to the weights $\weight(\Delta_i)+\frac{1}{3}$, $i=1,2,3$.
(b) The lines $\ell_x$ and $\ell_y$ define the ideal location of a vertex $v$.
(c) Shift every corner on the right of $\ell_x$ by $2n^{\alpha}$ to the right, and every corner above $\ell_y$ by $n^{\alpha}$ up.
(d) The location of  vertex $v$, and the rectangles $R(\Delta_1)$, $R(\Delta_2)$, and $R(\Delta_3)$.\label{fig:snapping}}
\end{figure}

Let $\ell_x$ and $\ell_y$ be the vertical and horizontal lines that define the ideal location for $v$ (Fig.~\ref{fig:snapping-a}). By definition, both intersect the interior of rectangle $R(\Delta)$. For all previously embedded vertices and all corners of previously defined rectangles $R(.)$, if they lie on or to the right of  of line $\ell_x$, then increase their $x$-coordinates by $2n^{\alpha}$; if they lie on or above the line $\ell_y$, then increase their $y$-coordinates  by $2n^{\alpha}$. Note that the width (resp., height) of all previously defined rectangles increases by 0 or $2n^\alpha$ (resp., 0 or $n^\alpha$). In particular, $R(\Delta)$ is expanded to a rectangle $R'(\Delta)$ whose width and height are $w(R(\Delta))+2n^\alpha$ and $h(R(\Delta))+n^\alpha$, respectively. Let $\ell_y'$ be a horizontal line at distance $n^\alpha$ above $\ell_y$; and let $\ell_x'$ and $\ell_x''$ be two vertical lines at distance $n^\alpha$ and $2n^\alpha$ to the right of $\ell_x$ (Fig.~\ref{fig:snapping-b}).

We now embed vertex $v$ at a point $B_n\cap R'(\Delta)$, and define rectangles $R(\Delta_i)$ for $i=1,2,3$. We shall choose the rectangles $R(\Delta_i)$, $i=1,2,3$, such that their widths and heights are at least as as large as if $v$ were placed at the ideal location in $R(\Delta)$. In addition, we also ensure that the lower left (resp., lower right) corner of
$R(\Delta_i)$, $i=1,2,3$, lies on a forward diagonal (resp., backward diagonal), thereby establishing invariant $I_2$ for all children of $\Delta$.

We define $R(\Delta_i)$, $i=1,2,3$, as follows (refer to Fig.~\ref{fig:snapping-c}). The bottom child of $\Delta$ is $\Delta_1$. Let the bottom side of $R(\Delta_i)$ be the bottom side of $R'(\Delta)$, and let its top side be the unique segment between $\ell_y$ and $\ell_y$ such that the height of $R(\Delta_i)$ is a multiple of $n^\alpha$. Since the lower left (right) corner of both $R(\Delta)$ and $R'(\Delta)$ are on a forward (backward) diagonal by invariant $I_2$, this is also true for $R(\Delta_1)$. Since the height of $R(\Delta_1)$ is a multiple of $n^\alpha$, the upper left (right) corner of $R(\Delta_1)$ is also on a forward (backward) diagonal.

Let the lower left corner of $R(\Delta_2)$ be the upper left corner of $R(\Delta_1)$, and choose $x$-coordinate of the lower right corner of $R(\Delta_2)$ between lines $\ell_x$ and $\ell_x'$ such that it is on a backward diagonal. Similarly, let the lower right corner of $R(\Delta_3)$ be the upper right corner of $R(\Delta_1)$, and choose the $x$-coordinate of its lower left corner between lines $\ell_x'$ and $\ell_x''$ such that it is on a forward diagonal. Let top side of both $R(\Delta_2)$ and $R(\Delta_3)$ be part of the top side of $R'(\Delta)$. Finally, embed vertex $v$ at the lower right corner of $R(\Delta_2)$.

This ensures $R(\Delta)\subseteq \Box(\Delta_i)\cap R'(\Delta)$ for $i=1,2,3$, maintaining invariant $I_1$. Note that the width and height of the rectangle $R(\Delta_i)$ are at least as large as if $v$ were placed at the ideal location within in $R(\Delta)$, establishing \eqref{eq:I33} hence \eqref{eq:I3} for $i=1,2,3$.
Invariants $I_1$--$I_3$ are maintained for $\Delta_1$, $\Delta_2$, and $\Delta_3$.

\paragraph{Preliminaries for Cases~2 and~3.}
In the remaining cases, the children of $\Delta$ are not hubs. By Lemma~\ref{lem:sibling},  $\Delta$ has at most one heavy child. For the one possible heavy child $\Delta_i$, $i\in \{1,2,3\}$, we shall choose a rectangle $R(\Delta_i)$ satisfying \eqref{eq:I3}; and we establish \eqref{eq:I4} for two light children of $\Delta$.

\paragraph{Case 2: the children of $\Delta$ are not hubs, and the left and right children are light.}
In this case, we use the following strategy (refer to Fig.~\ref{fig:case2+}). We choose pairwise disjoint ``preliminary'' rectangles $R_0(\Delta_1)$, $R_0(\Delta_2)$, and $R_0(\Delta_3)$ in $R(\Delta)$ such that $R_0(\Delta_1)$ satisfies \eqref{eq:I3}; and $R_0(\Delta_2)$ and $R_0(\Delta_3)$ satisfy \eqref{eq:I4}.
We also choose a rectangular region $Q\subset R(\Delta)$ such that placing $v$ at any point in $Q$ yields $R_0(\Delta_i)\subset \Box(\Delta_i)$, for $i=1,2,3$. Finally, we show that $Q$ contains a point from a full column. We place $v$ at an arbitrary point in $Q\cap B_n$, and put $R(\Delta_i)=\Box(\Delta_i)\cap R(\Delta)$.

Let $\delta_1=\weight(\Delta_2)+\weight(\Delta_3)+2/3$. Recall that the horizontal line $\ell_y$ partitions the area of $R(\Delta)$ into a top and a bottom part in ratio $\delta_1: (\weight(\Delta_1)+1/3)$. Decompose the top part of $R(\Delta)$ into 6 congruent rectangles by a horizontal line and two vertical lines. Now, let $R_0(\Delta_2)$ and $R_0(\Delta_3)$ be the upper left and upper right congruent rectangles, respectively, and let $Q$ be the lower middle rectangle (see Fig.~\ref{fig:case2+}).

\begin{figure}[htb]
\centering
\subfloat[]{
\label{fig:case2+}
\includegraphics[height=1.8in]{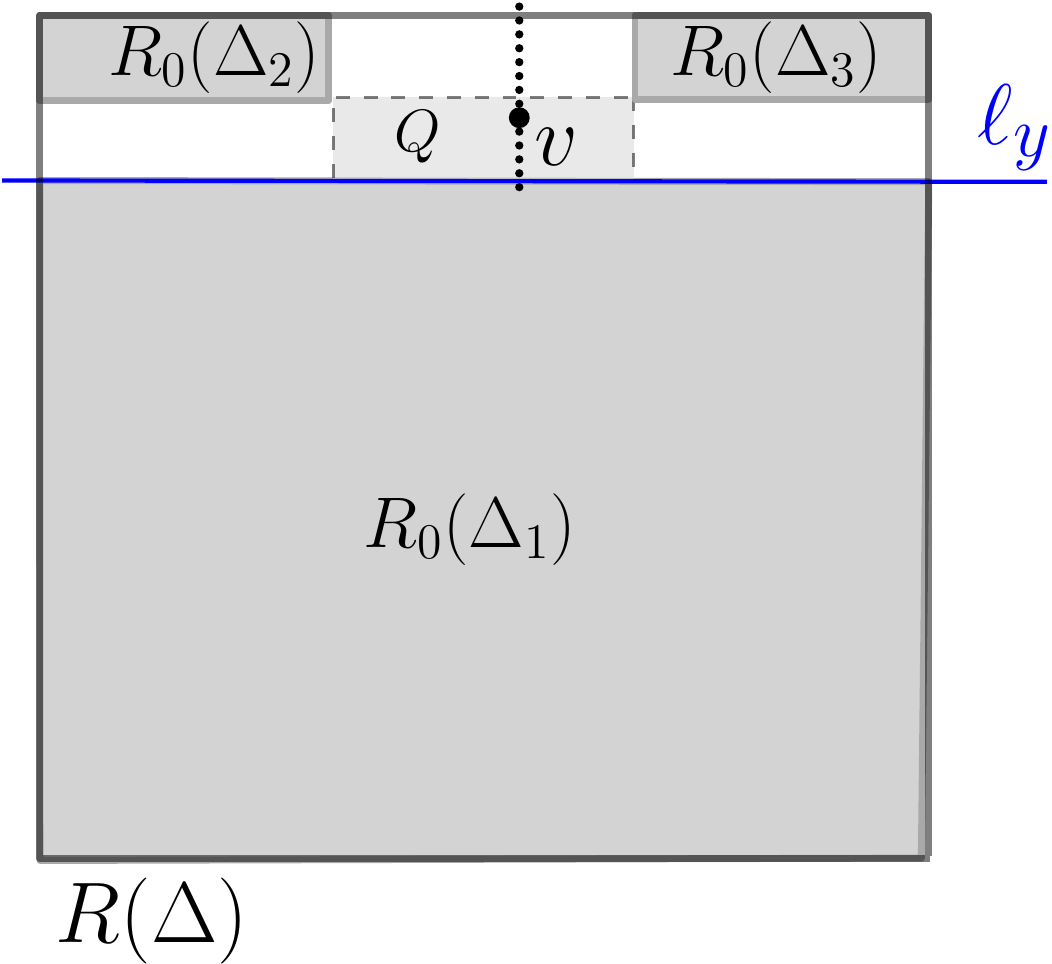}
}
\hspace{2cm}
\subfloat[]{
\label{fig:case3a}
\includegraphics[height=1.8in]{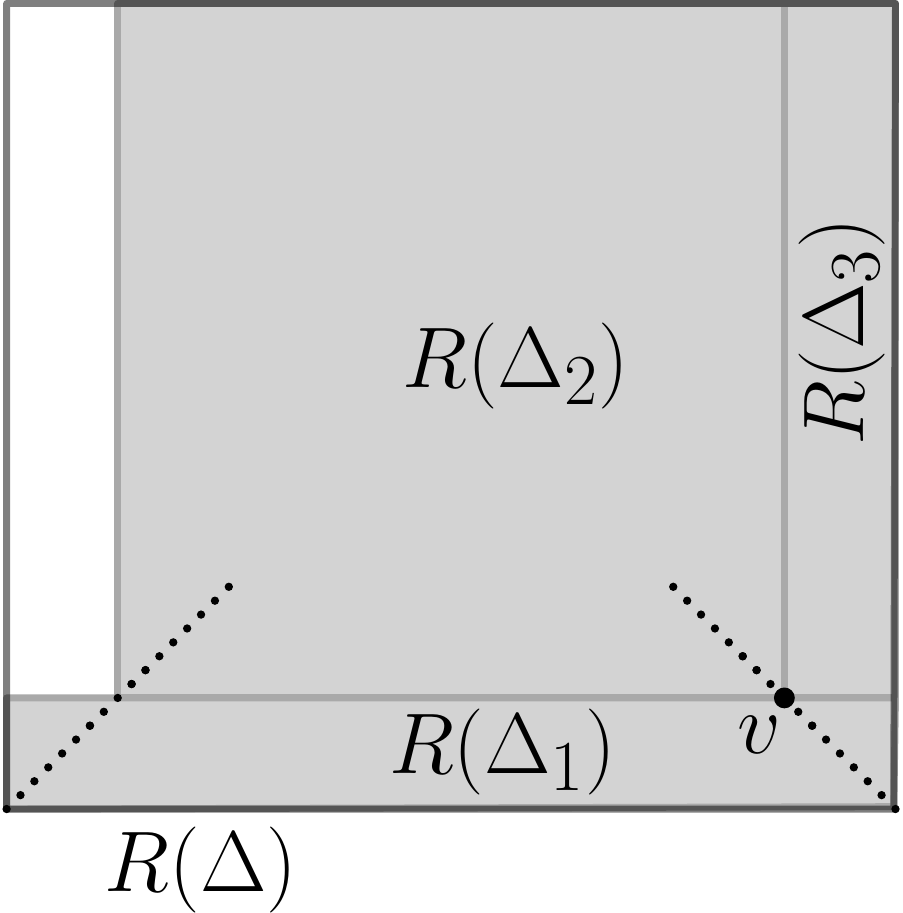}
}
\caption{
(a) When the left and right children are light (Case~2), vertex $v$ is placed on a full column.
(b) When the left or right child is heavy (Case~3), vertex $v$ is placed on a forward or backward diagonal.}
\label{fig:cases}
\end{figure}

The height $h(Q)=h(R_0(\Delta_2))=h(R_0(\Delta_3))$ of the congruent rectangles is bounded by
$$h(Q)=\frac{h(R(\Delta))}{2}\cdot \frac{\delta_1}{\weight(\Delta)}
     \geq \frac{7\weight(\Delta)}{2} \cdot \frac{\delta_1}{\weight(\Delta)}
     = \frac{7\delta_1}{2}\geq \frac{7(2/3)}{2}=\frac{7}{3}.$$
Their width is $w(Q)=w(R_0(\Delta_2))=w(R_0(\Delta_3))=\frac{1}{3}w(R(\Delta))>29n^{1-\alpha}$,
using \eqref{eq:I33} and \eqref{eq:heavy}. Hence $w(Q)\geq n^\alpha$ when $\alpha\leq 1/2$, and so $Q$
contains a point on a full column of $B_n$.

We place $v$ at an arbitrary point in $Q\cap B_n$, and put $R(\Delta_i)=\Box(\Delta_i)\cap R(\Delta)$ for $i=1,2,3$. The lower left (resp., lower right) corner of $R(\Delta_1)$ is the same as that of $R(\Delta)$, which establishes invariant $I_2$ for $\Delta_1$. Note that $\Delta_1$ satisfies \eqref{eq:I3}, since $\area(R(\Delta_1))\geq \area(R_0(\Delta_1))$. We show that $\Delta_2$ and $\Delta_3$ satisfy \eqref{eq:I4}. For $i=2,3$, we have $w(R(\Delta_i))\geq w(R_0(\Delta_i))\geq 29n^{1-\alpha} \geq n^\alpha\geq \weight(\Delta_i)$;  $h(R(\Delta_i))\geq h(R_0(\Delta_i))\geq \frac{7}{3}\delta_1\geq \weight(\Delta_i)$; and $\area(R(\Delta_i))$ is bounded by
\begin{eqnarray}
\area(R(\Delta_i))
&=& w(R(\Delta_i))h(R(\Delta_i))
\geq \frac{w(R(\Delta))}{3}\cdot \frac{h(R(\Delta))\delta_1}{\weight(\Delta_1)+1/3}
\geq \frac{\area(R(\Delta))\delta_1}{3\weight(\Delta)}\nonumber\\
&\geq& \frac{100n\weight(\Delta)\delta_1}{3\weight(\Delta)}
> 33n\delta_1
> 33n\weight(\Delta_i)
\geq 33n^{1-\alpha}\weight^2(\Delta_i),\nonumber
\end{eqnarray}
which is more than $n^\alpha\weight^2(\Delta_i)$ when $\alpha\leq 1/2$.

\paragraph{Case 3: the children of $\Delta$ are not hubs, and the left or right child of $\Delta$ is heavy} Assume that $\Delta_2$ is heavy (the case that $\Delta_3$ is heavy is treated analogously). Let $\delta_2=\weight(\Delta_1)+\weight(\Delta_3)+1$. We distinguish between two possibilities.

\paragraph{Case~3A: $4\delta_2 n^\alpha\leq w(R(\Delta))$ and $4\delta_2 n^\alpha \leq h(R(\Delta))$} Refer to Fig.~\ref{fig:case3a}. Place vertex $v$ corresponding to $\Delta$ at $(c+2\delta_2,b-2\delta_2)$ on a forward diagonal, and assign the rectangles
$R(\Delta_1)=(a,b)\times (c,c+2\delta_2)$, $R(\Delta_2):=(a+2\delta_2,b-2\delta_2)\times (c+2\delta_2,d)$; and $R(\Delta_3)=(d-2\delta_2,d)\times (c+2\delta_2,d)$.
Note that the lower-left (resp., lower-right) corner of $R(\Delta_2)$ is on a forward (resp., backward) diagonal of $B_n$, establishing invariant $I_3$.

By construction, rectangles $R(\Delta_i)$, $i=1,2,3$, satisfy invariants $I_1$ and $I_2$.
We establish \eqref{eq:I3} for $\Delta_2$:
\begin{eqnarray}
\area(R(\Delta_2))
&=&   [w(R(\Delta))-4\delta_2]\cdot [h(R(\Delta)) -2\delta_2]
\geq  \area(R(\Delta)) - [2w(\Delta)+4h(\Delta)]\delta_2\nonumber\\
&\geq&100n\weight(\Delta) - 6\cdot 14n\delta_2
\geq 100n(\weight(\Delta)-\delta_2)
= 100n\weight(\Delta_2).\nonumber
\end{eqnarray}
For $\Delta_1$ and $\Delta_3$, we establish \eqref{eq:I4}.
We have $w(R(\Delta_1))=w(R(\Delta))\geq 4\delta_2 n^\alpha \geq 4\weight(\Delta_1)n^\alpha$ and $h(R(\Delta_3)) = h(R(\Delta)) - 2\delta_2  \geq (4\delta_2 -2\delta_2/n^\alpha)n^\alpha\geq (4\weight(\Delta_3)-2)n^\alpha$.
On the other hand, $h(R(\Delta_1))=w(R(\Delta_3))= 2\delta_2\geq 2\weight(\Delta_1)+2,2\weight(\Delta_3)+2$ by construction.
%


\begin{figure}[htb]
\centering
\subfloat[]{
\label{fig:case3b-}
\includegraphics[height=1.8in]{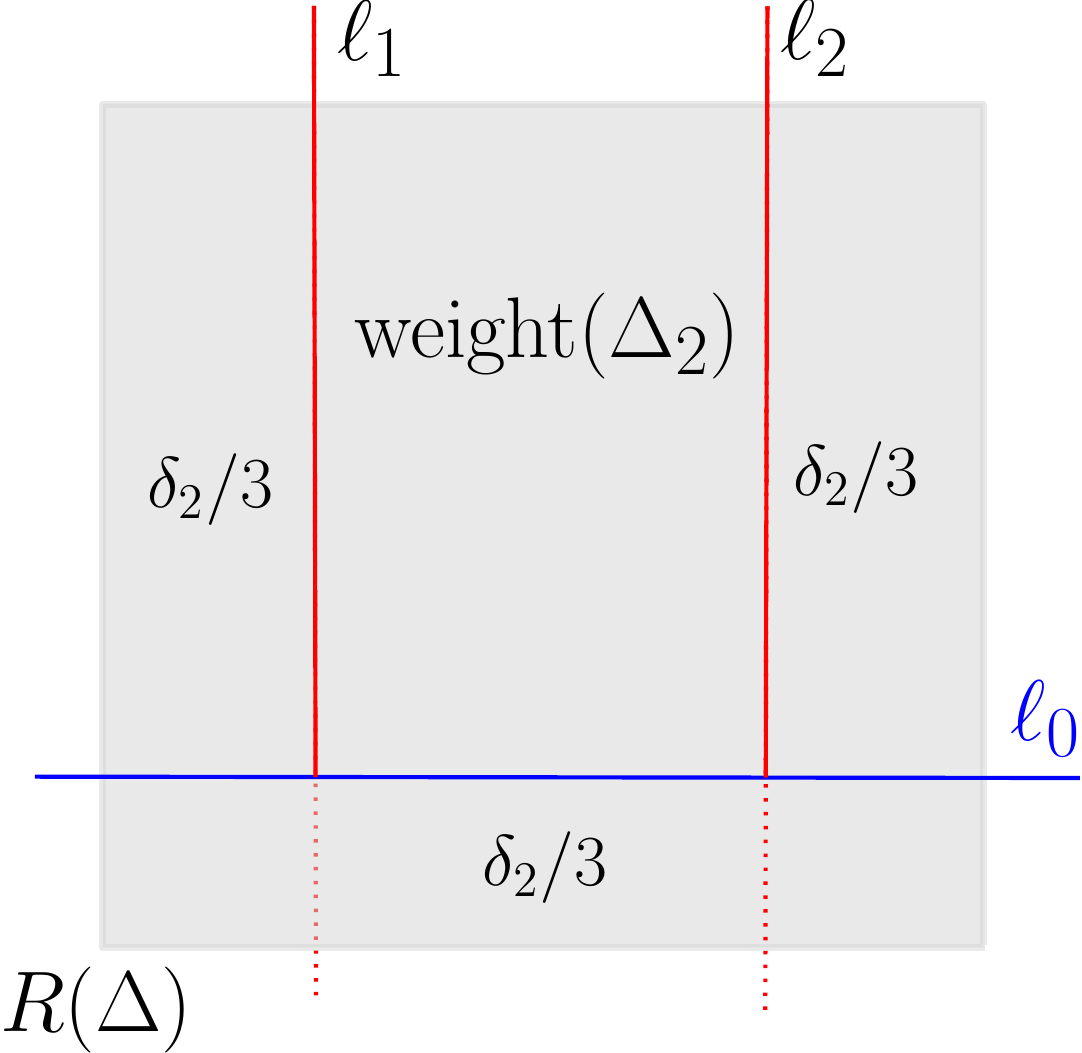}
}
\hspace{4mm}
\subfloat[]{
\label{fig:case3b}
\includegraphics[height=1.8in]{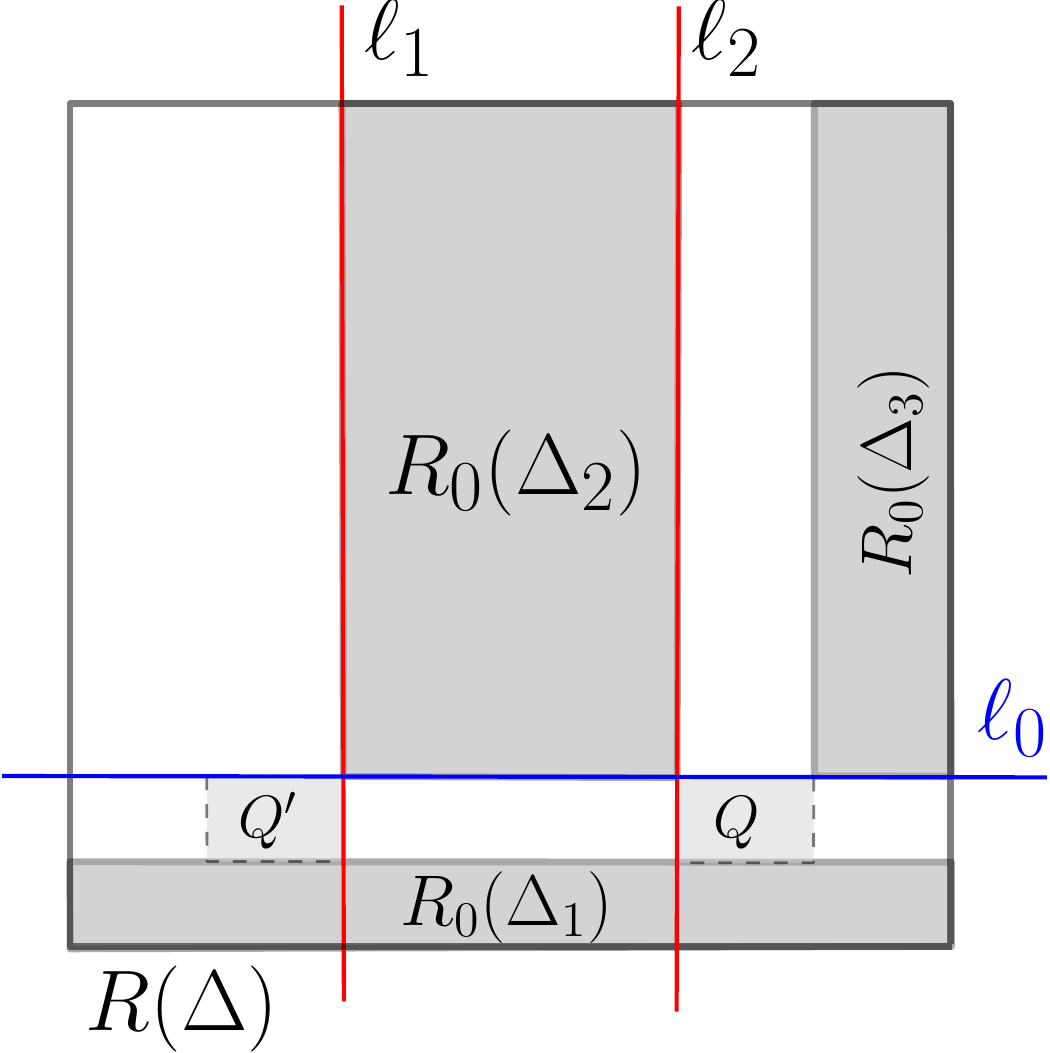}
}
\hspace{4mm}
\subfloat[]{
\label{fig:case3b+}
\includegraphics[height=1.8in]{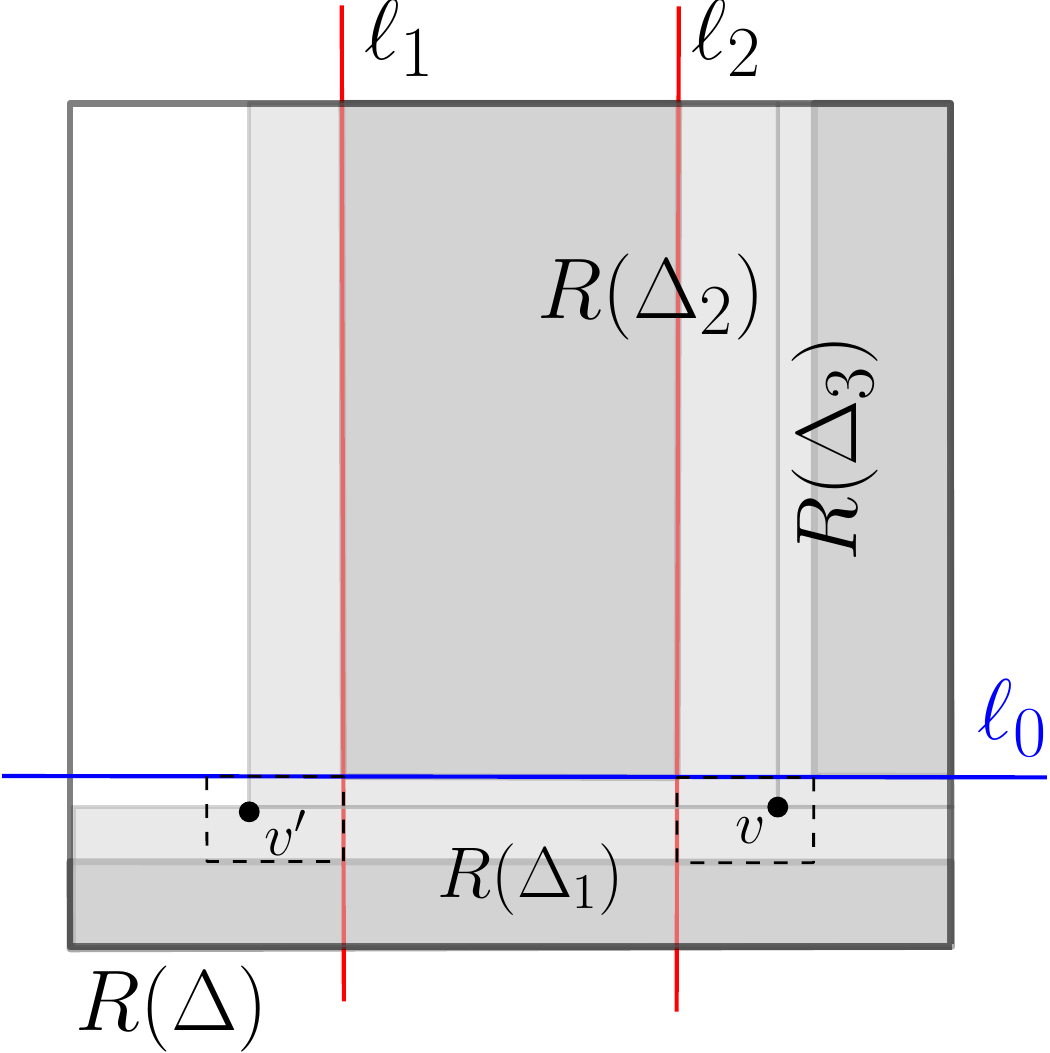}
}
\caption{
(a) The lines $\ell_0$, $\ell_1$, and $\ell_2$ partition the area of $R(\Delta)$ proportionally to $\delta_2/3$, $\weight(\Delta_2)$, $\delta_2/3$, and $\delta_2/3$.
(b) We define ``preliminary'' rectangles $R_0(\Delta_i)$, $i=1,2,3$, and a rectangle $Q$.
(c) Vertex $v$ is placed at a point in $B_n\cap Q$ lying on a backward diagonal.
}
\label{fig:Q}
\end{figure}

\paragraph{Case~3B: $4\delta_2 n^\alpha> w(R(\Delta))$ or $4\delta_2 n^\alpha >h(R(\Delta))$}
In this case, we follow a strategy similar to Case~2 (refer to Fig.~\ref{fig:case3b}). We choose pairwise disjoint ``preliminary'' rectangles $R_0(\Delta_1)$, $R_0(\Delta_2)$, and $R_0(\Delta_3)$ in $R(\Delta)$ such that $R_0(\Delta_2)$ satisfies \eqref{eq:I3}; and $R_0(\Delta_1)$ and $R_0(\Delta_3)$ satisfy \eqref{eq:I4}. We also choose a rectangular region $Q\subset R(\Delta)$ such that placing vertex $v$ at any point in $Q$ yields $R_0(\Delta_i)\subset \Box(\Delta_i)$, for $i=1,2,3$. Finally, we show that $Q$ contains a point on backward diagonal by Lemma~\ref{lem:dense2}. We place $v$ at such a point $p\in Q$. Choose $R(\Delta_3)$ such that $R_0(\Delta_3)\subseteq R(\Delta_3)$, its lower right corner is $p$ on a backward diagonal, and its lower left corner is a symmetric point $p'$ on a forward diagonal.

Partition the area of $R(\Delta)$ into a top and a bottom part by a horizontal line $\ell_0$ in ratio (Fig.~\ref{fig:case3b-})
$$\left(\weight(\Delta_2)+\frac{2}{3}\delta_2\right)
: \frac{\delta_2}{3}.$$
Partition the top area of $R(\Delta)$ into a left, middle, and right part by two vertical lines, $\ell_1$ and $\ell_2$, in ratio
$$\frac{\delta_2}{3} : \weight(\Delta_2) : \frac{\delta_2}{3}.$$
Let $R_0(\Delta_1)$ be the lower half of the part of $R(\Delta)$ below $\ell_0$, as indicated in Fig.~\ref{fig:case3b}. Let $R_0(\Delta_2)$ be the upper middle part of $R(\Delta)$. Let $R_0(\Delta_3)$ be the right half of the upper right part of $R(\Delta)$. Note that $\area(R_0(\Delta_2))=\area(R(\Delta))\weight(\Delta_2)/\weight(\Delta)$. We choose $Q$ to be the rectangle above $R_0(\Delta_1)$, below $\ell_0$, to the right of $R_0(\Delta_2)$, and to the left of $R_0(\Delta_3$). Observe that placing vertex $v$ at any point in $Q$ yields $R_0(\Delta_i)\subset \Box(\Delta_i)$ for $i=1,2,3$.

We show that $Q$ contains a point on backward diagonal of $B_n$. The width of $Q$ is bounded by
\begin{equation}\label{eq:wq}
w(Q) =\frac{1}{2} w(R(\Delta))\cdot \frac{\delta_2/3}{\weight(\Delta_3)+2\delta_2/3}
\geq \frac{w(R(\Delta))\delta_2}{6\weight(\Delta)}.
\end{equation}
Similarly, the height of $Q$ is bounded by
\begin{equation}\label{eq:hq}
h(Q)
=\frac{1}{2} h(R(\Delta))\cdot \frac{\delta_2/3}{\weight(\Delta)}
= \frac{h(R(\Delta))\delta_2}{6\weight(\Delta)}.
\end{equation}
Using the trivial bound $\delta_2\geq 1$ and \eqref{eq:I33}, we derive $w(Q)> 7/6>1$ and $h(Q)>7/6>1$. However, we have $4\delta_2 n^\alpha> w(R(\Delta))$ or $4\delta_2 n^\alpha >h(R(\Delta))$  in Case~3B. Combining $4\delta_2 n^\alpha >h(R(\Delta))$ with \eqref{eq:wq}, we obtain
$$w(Q)
> \frac{w(R(\Delta))h(R(\Delta))}{24n^\alpha\weight(\Delta)}
=    \frac{\area(R(\Delta))}{24n^\alpha\weight(\Delta)}
\geq \frac{100n\weight(\Delta)}{24n^\alpha\weight(\Delta)}
=\frac{25n^{1-\alpha}}{6}.$$
Similarly, the combination of $4\delta_2 n^\alpha >w(R(\Delta))$ and \eqref{eq:hq} gives
$$h(Q)
> \frac{h(R(\Delta))w(R(\Delta))}{24n^\alpha\weight(\Delta)}
=    \frac{\area(R(\Delta))}{24n^\alpha\weight(\Delta)}
\geq \frac{100n\weight(\Delta)}{24n^\alpha\weight(\Delta)}
=\frac{25n^{1-\alpha}}{6}.$$

Consequently, there is a point $p\in B_n\cap Q$ on a backward diagonal by Lemma~\ref{lem:dense2} when $\alpha\leq 1/2$. Since $R(\Delta)$ satisfies invariant~$I_2$, there is a point $p'\in B_n$ on a forward diagonal with the same $y$-coordinate as $p$, to the left of $R_0(\Delta_2)$. We place vertex $v$ at $p$; define $R(\Delta_1)=\Box(\Delta_1)\cap R(\Delta)$, $R(\Delta_3)=\Box(\Delta_3)\cap R(\Delta)$. Let $R(\Delta_2)$ be a rectangle in $\Box(\Delta_2)$ such that its lower left and lower right corners are $p'$ and $p$, respectively, and its top side is contained in the top side of $R(\Delta)$.

Note that $R_0(\Delta_i)\subset R(\Delta_i)$, for $i=1,2,3$. This establishes \eqref{eq:I3} for $\Delta_2$. Since $\delta_2=\weight(\Delta_1)+\weight(\Delta_3)+1$, this also implies the bound $\area(R(\Delta_i)) \geq \frac{25}{3} n\weight(\Delta_i)$ for $i=1,3$,
which immediately implies \eqref{eq:I4} for $\Delta_1$ and $\Delta_3$.

\smallskip
This concludes the description of the embedding algorithm. Since all invariants are maintained, our algorithm embeds $G$ into $B_n$. The function $\tau$ maps this embedding to a straight-line embedding in $S_n$, completing the proof of Theorem~\ref{thm:main}.

\section{Conclusion\label{sec:con}}

We have presented a set $S_n$ of $O(n^{3/2}\log n)$ points in the plane such that every $n$-vertex planar 3-tree has a straight-line embedding where the vertices are mapped into $S_n$. We do not know what is the minimum size of an $n$-universal point set for planar 3-trees. The point set $S_n$, $n\in \mathbb{N}$, certainly admits some other $n$-vertex planar graphs, as well. It remains to be seen whether $S_n$ is $n$-universal for all $n$-vertex planar graphs.

\paragraph{Acknowledgements.} We are grateful to Vida Dujmovi\'c and David Wood for their encouragement and for repeatedly posing the universal point set problem for 2-trees and planar 3-trees.

\end{document}